\documentclass[10pt,letterpaper,twoside,journal]{IEEEtran}

\usepackage{amsmath, amssymb}
\usepackage{mathrsfs}
\usepackage{cite}
\usepackage{graphicx}
\usepackage{algorithmic}
\usepackage{array}
\usepackage{textcomp}
\usepackage{multirow}
\usepackage{booktabs}
\usepackage{bm}
\usepackage{balance}
\usepackage{xcolor}       
\usepackage{todonotes}    

\usepackage[most]{tcolorbox} 

\newcommand{\Gh}{\textbf{G}}
\newcommand{\ds}{\textbf{d}}
\newcommand{\sg}{\sigma}

\newcommand{\sz}{_{0, \sigma}}
\newcommand{\E}{\mathrm{e}}

\usepackage{url} 

\usepackage{epstopdf}

\newcommand{\R}{\mathbb{R}}

\usepackage{graphics} 
\usepackage{epsfig} 
\usepackage{mathptmx} 
\usepackage{times} 
\usepackage{amsmath} 
\usepackage{amssymb}  
\newtheorem{theorem}{Theorem}
\newtheorem{lemma}{Lemma}
\newtheorem{corollary}{Corollary}

\newtheorem{remark}{Remark}
\newtheorem{definition}{Definition}

\makeatother

\begin{document}

\title{
 Accelerated Stabilization of Switched Linear MIMO Systems using Generalized Homogeneity
}

\author{Moussa Labbadi, Andrey Polyakov and Denis Efimov
\thanks{M. Labbadi is with the IRDL, UMR CNRS 6027, Bretagne INP, ENSTA, Institut
Polytechnique de Paris, Univ. Brest, Univ. Bretagne Sud, Brest, France (e-mail: \textsl{moussa.labbadi@enib.fr}). }
\thanks{A. Polyakov and D. Efimov are with the Inria, Univ. Lille, CNRS, UMR 9189 - CRIStAL, F-59000 Lille, France (e-mail: \textsl{denis.efimov@inria.fr}, \textsl{andrey.polyakov@inria.fr}).}
}

\maketitle
\IEEEoverridecommandlockouts

\thispagestyle{empty} 

\begin{abstract}
This paper addresses the problem of exponential and accelerated finite-time, as well as nearly fixed-time, stabilization of switched linear MIMO systems. The proposed approach relies on a generalized homogenization framework for switched linear systems and employs implicit Lyapunov functions for control design, covering both common and multiple Lyapunov function settings. Linear matrix equations and inequalities are derived to characterize the dilation generator and to synthesize the controller gains. Robustness of the resulting control laws with respect to system uncertainties and external disturbances is analyzed. The effectiveness of the proposed approach is illustrated through numerical examples.
\end{abstract}
\begin{IEEEkeywords}
Switched linear MIMO systems;
Homogeneous control;
Finite-time and fixed-time stability;
Robust stabilization; Generalized Homogeneity.
\end{IEEEkeywords}

\section{Introduction}
Hybrid systems are characterized by the interaction of continuous and discrete dynamic behaviors and have garnered substantial interest due to their importance in both theoretical research and practical applications~\cite{Goebel2012}. Switched systems are a subclass of hybrid dynamics, consisting of multiple modes and a switching signal that governs transitions among these modes~\cite{liberzon2003,Goebel2012}. In recent years, rapid advances in intelligent control have motivated extensive research on stability analysis and controller synthesis for switched systems, see~\cite{Lin2009,Long2013,yang2014survey,liberzon2003,ishii2002,shorten2007,zhao2012}, among others. 

A primary tool in this domain is the common Lyapunov function approach, while more recently, techniques based on multiple Lyapunov functions have been widely applied. Notably, \cite{liberzon2003,branicky1998} investigate the stability of switched linear systems using Lyapunov functions. In the review article \cite{shorten2007}, the authors provide an in-depth tutorial on stability analysis of switched linear systems and linear differential inclusions, presenting conditions for the subsystems involved in these models. The concept of employing multiple Lyapunov functions to assess the stability of switched systems was first introduced in \cite{branicky1998}, marking a significant development in the field. This idea has since stimulated extensive research on stability analysis using multiple Lyapunov functions. Additionally, \cite{zhao2012} relaxes the traditionally strict non-increasing condition on the derivative of Lyapunov functions, considering nonnegative linear systems.

The rate at which estimation or regulation errors converge to zero is a key performance metric in many real-world applications. Further studies on finite-time stability and robust control synthesis for uncertain switched systems are presented in \cite{garg2015, hu2021finite,Orlov2004}, where both necessary and sufficient conditions are established to guarantee stability under arbitrary switching scenarios. Extensions to time-delay systems are discussed in \cite{zhang2022fixed}. Additionally, sufficient criteria for finite-time stability of impulsive switched systems have been derived based on common~\cite{kumar2023finite} and multiple~\cite{zhang2022finite} Lyapunov functions, combined with a dwell-time condition.

Recently, significant attention has been devoted to applying the theory of homogeneous systems for control synthesis in both SISO (Single-Input, Single-Output) and MIMO (Multiple-Input, Multiple-Output) systems~\cite{Polyakov2016, Polyakov2025, Zimenko2020a, zimenk2018generalized, Polyakov2015a, Polyakov2012}. The core idea is to design controllers or observers that render the closed-loop or error system homogeneous. This approach is commonly referred to as homogenization~\cite{Zimenko2020a}. In such systems, the degree of homogeneity plays a critical role in determining convergence rates. Specifically, the local stability of a homogeneous system is closely linked to its global stability: if an asymptotically stable system has a negative degree of homogeneity, it is finite-time stable~\cite{Orlov2004}; conversely, a positive degree implies nearly fixed-time stability~\cite{Polyakov2025}.

Early work on finite-time stability of homogeneous switched systems was carried out in \cite{Orlov2004}, where it is shown that for switched systems with negative homogeneity degree, global asymptotic stability implies global finite-time stability. In \cite{Nersesov2008}, finite-time stabilization of nonlinear impulsive systems is established using Lyapunov‑based conditions. Finite-time stability of hybrid dynamical systems is further analyzed in \cite{Chen2012}, where Lyapunov criteria guarantee convergence to the origin in finite time. Global finite-time stabilization of switched nonlinear systems using control laws with positive odd rational powers was developed in \cite{Fu2015}. Cyclic switched nonlinear systems are shown to achieve finite-time stability under conditions derived in \cite{Yang2015}. Finite-time stability results with explicit settling-time estimates for nonlinear impulsive systems are provided in \cite{Li2019}.



Motivated by the above discussion, this paper studies feedback homogenization and robust stabilization of switched linear MIMO systems. The main contributions are summarized as follows. First, it is shown that switched linear MIMO systems can be rendered homogeneous of a prescribed degree by means of a linear state feedback combined with a homogeneous correction term. This property enables a unified treatment of exponential, finite-time, and nearly fixed-time stabilization within a generalized homogeneity framework. Second, both common and multiple homogeneous Lyapunov function approaches are developed. In the multiple Lyapunov function case, global uniform stability properties are established under an average/minimum dwell-time condition, explicitly characterizing the effect of mode-dependent Lyapunov functions and switching signals. Third, the controller synthesis is formulated in terms of linear matrix equations and inequalities, which characterize the generator matrix of the dilation and allow systematic tuning of the controller gains. Fourth, robustness with respect to uncertainties and external disturbances is analyzed. Sufficient conditions are derived to ensure that the finite-time, exponential, and nearly fixed-time stability properties are preserved in the presence of perturbations. Finally, numerical simulations are provided to validate the theoretical results and to illustrate the effectiveness of the proposed approach.

The remainder of the paper is organized as follows. Section~II formulates the stabilization problem for switched linear MIMO systems in the presence of perturbations. Section~III introduces the necessary preliminaries, including stability notions, generalized homogeneity, and implicit Lyapunov functions. Section~IV presents the main stabilization results and the proposed control laws. Section~V provides numerical
simulations, and Section~VI concludes the paper.
\subsection*{Notation}
\begin{itemize}
    \item \(\mathbb{R}\) denotes the field of real numbers, and \(\mathbb{R}^+ = \{ \alpha \in \mathbb{R} : \alpha \geq 0 \}\) represents the non-negative reals.
    
    \item \(\| x \| = \sqrt{x^\top x}\) is the Euclidean norm of \(x \in \mathbb{R}^n\).
    
    \item \(P \succ 0\) indicates that \(P  \in \mathbb{R}^{n \times n}\) is a symmetric and positive definite matrix. The relations \(\prec 0\), \(\succeq 0\), and \(\preceq 0\) refer to various definiteness conditions for matrices.
    
    \item \(\ds(s) = \E^{s \Gh_\ds}\), where \(s \in \mathbb{R}\), is a linear dilation in \(\mathbb{R}^n\) with anti-Hurwitz matrix \(\Gh_\ds\).
    
    \item \(\| x \|_P = \sqrt{x^\top P x}\) is the weighted Euclidean norm with \(P \succ 0\) and \(P \Gh_\ds + \Gh_\ds^\top P \succ 0\). 
    \item \(S = \{ x \in \mathbb{R}^n : \| x \|_P = 1 \}\) is the unit sphere in the weighted Euclidean space.
    
    \item \(\| \cdot \|_\ds\) represents the canonical homogeneous norm induced by \(\| \cdot \|\) \cite{Polyakov2025}.
    
    \item \(H_\ds(\mathbb{R}^n)\) is the set of all \(\ds\)-homogeneous scalar functions on \(\mathbb{R}^n\).
    
    \item \(F_\ds(\mathbb{R}^n)\) is the set of all \(\ds\)-homogeneous vector fields on \(\mathbb{R}^n\).
    
    \item \(\text{deg}_\ds(g)\) is the homogeneity degree of the function \(g \in H_\ds(\R^n)\) or \(g\in F_\ds(\R^n)\).
    \item  \( I_n \) is the identity matrix in \( \mathbb{R}^{n \times n} \).
    \item  $\Re(\lambda)$ denotes the real part of a complex number $\lambda$.
\end{itemize}

Preliminary results have been presented in our conference work~\cite{labbadi2025feedback}. Compared to~\cite{labbadi2025feedback}, this article provides additional details, including new definitions (Definition~\ref{def:ds-homogeneous}) and complete proofs. In particular, the analysis of multiple Lyapunov function settings is extended through Theorems~\ref{theorem:free_disturbances_MLF} and~\ref{thm:disturbances2}, while the case of global stability is addressed in Corollary~\ref{cor:sddt}.

\section{Problem Statement}
We consider the following switched linear plant:
\begin{equation} \label{eq:switched_linear}
\begin{aligned}
    \dot{x}(t) = A_{\sigma(t)} x(t) + B_{\sigma(t)} u(t) + E_{\sigma(t)}\omega(t, x(t)), \\ \sigma(t) \in \Sigma= \{1, \ldots, N\},
    \end{aligned}
\end{equation}
where $x(t)\in\mathbb{R}^n$ denotes the system state, $u(t)\in\mathbb{R}^m$ is the control input, and $A_{\sigma(t)} \in \mathbb{R}^{n \times n}$ and $B_{\sigma(t)} \in \mathbb{R}^{n \times m}$ are the system and input matrices associated with the active mode $\sigma(t)$. For each mode, the pair $\left(A_{\sigma(t)}, B_{\sigma(t)}\right)$ is assumed to be controllable. The matrix $E_{\sigma(t)} \in \mathbb{R}^{n \times p}$ characterizes the distribution of external disturbances, and $\omega(t, x(t))$ represents an external perturbation.

The state $x(t)$ and the piecewise-continuous switching signal $\sigma(t)$ are assumed to be fully measurable. To rigorously define system solutions in the presence of possible discontinuities in $\omega(t,x)$, Filippov’s differential inclusion framework is adopted~\cite{Filippov1960}.

The problem addressed in this paper is to design a constructive stabilizing feedback control law based on generalized homogenization techniques such that the switched system is robustly stabilized under arbitrary switching and  uncertainties. The proposed control design aims to ensure exponential, finite-time, or nearly fixed-time convergence by exploiting the homogeneity properties of the closed-loop system, while providing systematic tuning rules and robustness guarantees with respect to external disturbances.

\section{Preliminaries}

\subsection{Stability Notions}
%
We recall the  stability notions for an autonomous system:
\begin{align}\label{eq:SN}
\dot{x}(t) = f(x(t)),\; x(t)\in\R^n,\;t\geq0,		
\end{align}
where $f:\R^n\to\R^n$ ensures existence and uniqeness of solutions of the system. For brevity we assume that the system is forward complete, meaning that for  any \( x_0 \in \mathbb{R}^n \), there is a unique solution \( x(t; x_0) \).

\begin{definition}(see \cite{Bhat2000, Orlov2004})
The origin of \eqref{eq:SN} is said to be globally finite-time stable if it is globally asymptotically stable and, for any initial condition \( x_0 \in \R^n \), the solution $x(t; x_0)$ reaches the origin in finite time: \( x(t; x_0) = 0 \) for all \( t \geq T(x_0) \), where \( T(x_0) \) is called a settling-time function and satisfies \( T(x_0) < +\infty \) for all \( x_0 \in\R^n \).
\end{definition}
\begin{definition}(see \cite{Polyakov2012})
A subset \( M \subset \mathbb{R}^n \) is  globally finite-time attractive for system \eqref{eq:SN} if any trajectory \( x(t; x_0) \)  reaches \( M \) in a finite time \( t = T(x_0) \) and remains there for all \( t \geq T(x_0) \). Furthermore, the set \( M \) is said to be fixed-time attractive if the settling-time function is uniformly bounded by a constant, i.e., \( T(x_0) \leq T_{\max} \) for all \( x_0 \in\R^n\).
\end{definition}

\begin{theorem}(see \cite{Bhat2000,Laghrouche2017})
Let \( V :\R^n\to\R^+\) be a positive definite \( C^1 \) function on an open neighborhood of the origin \( \mathcal{D} \subseteq \mathbb{R}^n \), and there exist  real constants \( \kappa > 0 \) and \( \alpha \geq 0 \), such that for the system (\ref{eq:SN}):
\begin{equation}
    \dot{V}(x) \leq -\kappa V^{\alpha}(x), \quad x \in \mathcal{D}.
\end{equation}
Then, depending on \( \alpha \), the origin exhibits stability with the following types of convergence:
\begin{enumerate}
    \item If \( \alpha = 1 \), the origin is asymptotically  stable.
    \item If \( 0 \leq \alpha < 1 \), the origin is finite-time stable and settling time satisfies \( T(x_0) \leq \frac{1}{\kappa(1-\alpha)} V_0^{1-\alpha} \), where \( V_0 = V(x_0) \).
    \item If \( \alpha > 1 \), the origin is asymptotically stable and, for any \( \iota > 0 \), the set \( \mathcal{B} = \{x \in \mathcal{D} : V(x) < \iota\} \) is fixed-time attractive with \( T_{\max} = \frac{1}{\rho(\alpha-1)} \iota^{\alpha-1} \) (the origin is nearly fixed-time stable).
\end{enumerate}
If \( \mathcal{D} = \mathbb{R}^n \) and \( V \) is radially unbounded, the system exhibits these properties globally.
\end{theorem}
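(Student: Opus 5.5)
The proof rests on a comparison argument applied to the scalar function $v(t)=V(x(t;x_0))$ along solutions that stay in $\mathcal{D}$. First I would establish Lyapunov stability in the classical way. Since $V$ is continuous, positive definite and $C^1$, and $\dot V\le -\kappa V^\alpha\le 0$ on $\mathcal{D}$, the sublevel sets $\{V<c\}$ lie inside small balls contained in $\mathcal{D}$ for small $c$, and these sets are forward invariant. Hence the origin is stable and solutions starting near it remain in $\mathcal{D}$. Forward invariance of $\{V<c\}$ is also what makes the later integration legitimate. In the global case, where $\mathcal{D}=\R^n$ and $V$ is radially unbounded, every sublevel set is compact, so this step applies to all $x_0$.

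Next I would solve the scalar differential inequality $\dot v\le-\kappa v^\alpha$ case by case, using the comparison lemma on the maximal interval where $v>0$.

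\emph{Case $\alpha=1$.} Gr\"onwall's inequality gives $v(t)\le v(0)e^{-\kappa t}$. So $V(x(t))\to0$, which together with positive definiteness gives asymptotic (indeed exponential-in-$V$) stability.

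\emph{Case $0\le\alpha<1$.} While $v>0$, the quantity $\frac{d}{dt}v^{1-\alpha}=(1-\alpha)v^{-\alpha}\dot v$ is at most $-\kappa(1-\alpha)$. Integrating gives $v^{1-\alpha}(t)\le V_0^{1-\alpha}-\kappa(1-\alpha)t$. Therefore $v$ reaches $0$ no later than $T=V_0^{1-\alpha}/(\kappa(1-\alpha))$. Once $v=0$, monotonicity of $v$ keeps it at $0$, so $x\equiv0$ afterwards.

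\emph{Case $\alpha>1$.} While $v>0$, $\frac{d}{dt}v^{1-\alpha}=(1-\alpha)v^{-\alpha}\dot v\ge\kappa(\alpha-1)$. Hence $v^{1-\alpha}(t)\ge V_0^{1-\alpha}+\kappa(\alpha-1)t\ge\kappa(\alpha-1)t$, and this lower bound does not depend on $V_0$. It follows that $v(t)<\iota$ as soon as $\kappa(\alpha-1)t>\iota^{1-\alpha}$, i.e. for all $t>T_{\max}$, uniformly in $x_0$. Because $\{V<\iota\}$ is forward invariant, trajectories remain there. Asymptotic stability follows as in the first case, since $v$ is decreasing and its only possible limit is $0$.

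The main obstacle is bookkeeping rather than analysis. The $\alpha>1$ case gives $T_{\max}=\iota^{1-\alpha}/(\kappa(\alpha-1))$, not $\iota^{\alpha-1}/(\rho(\alpha-1))$ as written in the statement. I would prove the bound in this corrected form and note that $\rho$ should read $\kappa$. A second point needs care: in the local setting, a trajectory must not leave $\mathcal{D}$ before the estimates apply. I would handle this by choosing $\iota$, and the initial sublevel set, small enough that $\{V<c\}\subset\mathcal{D}$. The global statement then follows because radial unboundedness makes every sublevel set bounded, so all these restrictions can be dropped.
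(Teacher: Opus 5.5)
Your proposal is correct and is essentially the standard argument behind this result (the paper gives no proof of its own; it only cites Bhat--Bernstein and Laghrouche et al.): integrate $\frac{d}{dt}V^{1-\alpha}$ along trajectories confined to a forward-invariant set inside $\mathcal{D}$. You are also right that the third item should read $T_{\max}=\iota^{1-\alpha}/(\kappa(\alpha-1))$, since the printed $\iota^{\alpha-1}/(\rho(\alpha-1))$ vanishes as $\iota\to 0$, which already fails for $\dot V=-\kappa V^\alpha$.

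One small fix: for a merely positive definite $V$, the whole sublevel set $\{V<c\}$ need not be small (e.g.\ if $V\to 0$ at infinity). Use instead $\Omega_c=\{x\in B_r : V(x)<c\}$ with $\bar B_r\subset\mathcal{D}$ and $c<\min_{\|x\|=r}V(x)$, which is the set that is genuinely forward invariant.
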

\subsection{Generalized Homogeneity}
\begin{definition}(see \cite{Polyakov2025,Zubov1958,Kawski1991})
A map \( \ds : \mathbb{R} \to \mathbb{R}^{n \times n} \) is a linear continuous  dilation in \( \mathbb{R}^n \) if:
\begin{enumerate}
    \item \( \ds(0) = I_n \), and \( \ds(t+s) = \ds(t)\ds(s) \) for all \( t, s \in \mathbb{R} \).
    \item \( \ds \) is continuous.
    \item \( \lim_{s \to -\infty} \|\ds(s)x\| = 0 \), and \( \lim_{s \to +\infty} \|\ds(s)x\| = +\infty \), for all \( x \in \mathbb{R}^n, \|x\| = 1 \).
\end{enumerate}
\end{definition}

\begin{theorem}(see \cite{Polyakov2025})
If \( \ds \) is a dilation in \( \mathbb{R}^n \), then the  generator matrix \( \Gh_\ds \) is anti-Hurwitz, and there exists a matrix \( P \in \mathbb{R}^{n \times n} \) such that
    \[
        P \Gh_\ds + \Gh_\ds^\top P \succ 0, \quad P \succ 0.
    \]
\end{theorem}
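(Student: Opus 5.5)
First I will identify the generator. Since $\ds$ is a continuous one-parameter group of matrices, it is a group of linear operators on a finite-dimensional space. A continuous group of this kind is automatically differentiable: take $\int_0^h \ds(s)\,ds$, which is invertible for small $h$. Hence $\ds(s)=\E^{s\Gh_\ds}$ with $\Gh_\ds=\frac{d}{ds}\ds(s)|_{s=0}$. It then remains to show two things: $\Gh_\ds$ is anti-Hurwitz, and a Lyapunov matrix $P$ exists.

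\textbf{Anti-Hurwitz.} I argue by contradiction. Suppose some eigenvalue $\lambda$ of $\Gh_\ds$ has $\Re(\lambda)\le 0$.
\begin{itemize}
\item If $\lambda$ is real, take a real eigenvector $v$ with $\|v\|=1$. Then $\ds(s)v=\E^{\lambda s}v$, whose norm stays bounded (in fact equals $\E^{\lambda s}\le 1$) as $s\to+\infty$. This contradicts $\lim_{s\to+\infty}\|\ds(s)v\|=+\infty$.
\item If $\lambda=a+ib$ with $b\neq 0$, write the eigenvector as $v_1+iv_2$. The real plane spanned by $v_1,v_2$ is invariant, and on it $\ds(s)$ acts as $\E^{as}$ times a rotation-type matrix. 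Taking $x=v_1/\|v_1\|$ (nonzero, since $v_1$ and $v_2$ are linearly independent), the trajectory $\ds(s)x$ stays bounded for $s\ge 0$ when $a\le 0$. This again contradicts the limit condition.
\end{itemize}
Hence every eigenvalue satisfies $\Re(\lambda)>0$, i.e. $\Gh_\ds$ is anti-Hurwitz.

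\textbf{Lyapunov matrix.} Since $-\Gh_\ds$ is Hurwitz, I apply the standard Lyapunov theorem to it: for $Q=I_n\succ 0$, the equation $(-\Gh_\ds)^\top P+P(-\Gh_\ds)=-I_n$ has a unique solution
$$P=\int_0^{\infty}\E^{-\Gh_\ds^\top t}\E^{-\Gh_\ds t}\,dt\succ 0.$$
The integral converges because $\|\E^{-\Gh_\ds t}\|\le c\,\E^{-\mu t}$ for some $c>0$ and some $\mu>0$ below the minimal real part of the spectrum. Positive definiteness follows from $x^\top P x=\int_0^\infty\|\E^{-\Gh_\ds t}x\|^2dt>0$ for $x\neq 0$. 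Rearranging the Lyapunov equation gives $P\Gh_\ds+\Gh_\ds^\top P=I_n\succ 0$.

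\textbf{Main obstacle.} The only delicate step is the first one: deducing the matrix-exponential form from mere continuity, which is needed for $\Gh_\ds$ to exist at all. The standard semigroup argument handles it. Let $M_h=\frac1h\int_0^h\ds(s)ds$, which tends to $I_n$ as $h\to 0$ and so is invertible for small $h$. Then $\ds(t)M_h=\frac1h\int_t^{t+h}\ds(s)ds$ is differentiable in $t$, so $\ds$ is $C^1$. Differentiating the group law $\ds(t+s)=\ds(t)\ds(s)$ gives $\ds'(t)=\Gh_\ds\ds(t)$, hence $\ds(t)=\E^{t\Gh_\ds}$.
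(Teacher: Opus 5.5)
Your argument is correct and complete. The paper gives no proof of its own here (it cites \cite{Polyakov2025}), and your proof follows the standard route. You obtain the generator of a continuous matrix group (a step the paper's notation $\ds(s)=\E^{s\Gh_\ds}$ already takes for granted), derive the anti-Hurwitz property from the limit conditions of a dilation, and get $P$ from the Lyapunov equation $P\Gh_\ds+\Gh_\ds^\top P=I_n$ for the Hurwitz matrix $-\Gh_\ds$.

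A small shortcut is available: the condition $\lim_{s\to-\infty}\|\ds(s)x\|=0$ for all $x$ already says $\E^{-t\Gh_\ds}\to 0$ as $t\to+\infty$. That is exactly the statement that $-\Gh_\ds$ is Hurwitz, so your real/complex eigenvalue case split becomes unnecessary.
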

\begin{definition}\label{def:vector_fied} (see \cite{bhat2005geometric, Kawski1991}): A vector field \( \phi: \mathbb{R}^n \to \mathbb{R}^n \) (respectively, a function \( g: \mathbb{R}^n \to \mathbb{R} \)) is said to be \( \ds \)-homogeneous of degree \( \mu \in \mathbb{R} \) if  
\(
\phi(\ds(s)x) = \E^{\mu s} \ds(s) \phi(x)
\)
(respectively,  
\(
g(\ds(s)x) = \E^{\mu s} g(x)),
\)
for all \( x \in \mathbb{R}^n \) and \( s \in \mathbb{R} \). 
\end{definition} 

The following result holds:  
\begin{theorem}\label{thm:Orlov} (see \cite{Orlov2004,bhat2005geometric}) 
 Let $\phi : \mathbb{R}^n \to \mathbb{R}^n$ be a piecewise continuous vector field that is  $\ds$-homogeneous of degree $\mu \in\R$. If the system  $
    \dot{x} = \phi(x)$
is locally asymptotically stable at the origin, then it is globally  finite-time stable for $\mu<0$ and nearly fixed-time stable for $\mu>0$.
\end{theorem}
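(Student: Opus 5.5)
The plan is to reduce the statement to the Lyapunov characterization in the earlier theorem (the $\dot V\le -\kappa V^{\alpha}$ criterion) by building a homogeneous Lyapunov function out of the local one, and then using the scaling identities in Definition~\ref{def:vector_fied} to turn local estimates into global ones.

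\textbf{Step 1 (homogeneous Lyapunov function).} Local asymptotic stability of $\dot x=\phi(x)$ gives a converse Lyapunov function on a neighborhood of the origin. For piecewise continuous $\phi$ this should be read in the Filippov sense, and we take a smooth strong Lyapunov function for the regularized inclusion. Since $\phi$ is $\ds$-homogeneous, the Filippov regularization $F$ inherits the same property: $F(\ds(s)x)=\E^{\mu s}\ds(s)F(x)$. Consequently trajectories scale: if $x(\cdot)$ is a solution, then $\ds(s)x(\E^{\mu s}\,\cdot)$ is also a solution.

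From this we get global asymptotic stability. Take a small ball $B$ contained in the basin of attraction. Because $\|\ds(s)x\|\to 0$ as $s\to-\infty$, any initial condition can be brought into $B$ by some $\ds(s)$. Applying the scaling of solutions in reverse, every solution converges to the origin, and uniformly so on compact sets.

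Next, apply Rosier's construction adapted to $\ds$. Let $V_0$ be the local Lyapunov function and fix a smooth cutoff $a$ with $a\equiv 0$ near $0$ and $a\equiv1$ for large values. Define
\[
V(x)=\int_{-\infty}^{+\infty} \E^{-k s}\,a\bigl(V_0(\ds(s)x)\bigr)\,ds ,
\]
with $k>0$. This $V$ is $\ds$-homogeneous of degree $k$, positive definite, radially unbounded, and $C^1$ away from the origin. Its derivative along $F$ is negative on the unit sphere $S$, and by homogeneity $\dot V$ is then a homogeneous function of degree $k+\mu$.

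\textbf{Step 2 (differential inequality).} Define the negative number $-\kappa=\max_{x\in S}\max_{f\in F(x)}\nabla V(x)f$. This maximum exists by compactness of $S$ and upper semicontinuity of $F$. Every $x\neq0$ can be written as $x=\ds(s)z$ with $z\in S$, and the homogeneity of $V$ and $\dot V$ then gives
\[
\dot V(x)\le -\kappa' V(x)^{1+\mu/k}
\]
for a suitable $\kappa'>0$. The earlier theorem now applies with $\alpha=1+\mu/k$. If $\mu<0$, then $\alpha<1$ and the origin is globally finite-time stable. If $\mu>0$, then $\alpha>1$, every sublevel set is fixed-time attractive, and the origin is nearly fixed-time stable. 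Since $V$ is radially unbounded, both conclusions are global. For $\mu<0$ we choose $k>-\mu$ so that $\alpha\ge0$.

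\textbf{Main obstacle.} The hardest step is Step 1: producing a $C^1$ local Lyapunov function for a merely piecewise continuous, hence Filippov, vector field, and proving that the Rosier integral converges and is differentiable. I would first invoke the converse Lyapunov theorem for upper semicontinuous inclusions with compact convex values (Clarke--Ledyaev--Stern) to get a smooth $V_0$.

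Convergence of the integral comes from the cutoff. The integrand vanishes for $s\ll0$, because $\ds(s)x\to0$ there. For $s\gg0$ it decays thanks to the factor $\E^{-ks}$.

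Alternatively, the Lyapunov construction can be bypassed by a direct argument. Let $T_0$ be a uniform time after which trajectories starting on $S$ enter $\tfrac12$ times the homogeneous ball. The scaling gives a settling time of the form $T(\ds(s)z)=\E^{-\mu s}T(z)$, so summing a geometric series in $\E^{-\mu s}$ yields a finite settling time when $\mu<0$ and a bound uniform in the initial condition for reaching any fixed neighborhood when $\mu>0$.
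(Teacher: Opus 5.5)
The paper gives no proof of this statement; it is imported from \cite{Orlov2004,bhat2005geometric}, so there is no internal argument to match, and your proposal is a correct reconstruction.

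Your main route has four steps. The Filippov regularization $F$ inherits homogeneity. Local asymptotic stability becomes global by scaling. You then build the Rosier-type average $V(x)=\int_{\R}\E^{-ks}a(V_0(\ds(s)x))\,ds$ and observe that $\dot V$ is homogeneous of degree $k+\mu$. Finally the criterion $\dot V\le-\kappa' V^{\alpha}$ with $\alpha=1+\mu/k$ applies. This is the Lyapunov-function proof in the spirit of Rosier and Bhat--Bernstein, made to work for discontinuous $\phi$ through the Clarke--Ledyaev--Stern converse theorem for inclusions.

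Your fallback uses a uniform halving time, rescaled as $\E^{-\mu s}T_0$ and summed geometrically. This is the direct scaling argument of the kind used in \cite{Orlov2004}. The two routes trade off differently:
\begin{itemize}
\item The Lyapunov route costs a heavy converse theorem. It yields an explicit decay inequality that feeds the Lyapunov criterion quoted in the paper, and it is the mechanism the paper itself uses later through $\frac{d}{dt}\|x\|_{\ds}\le-\rho\|x\|_{\ds}^{1+\mu}$.
\item The scaling route needs only uniform attractivity on compact sets plus the solution-scaling identity. It therefore carries over to the time-rescaled switched setting of Definition~\ref{def:ds-homogeneous}, whenever the class of admissible switching signals is invariant under $\sigma\mapsto\sigma_s$.
\end{itemize}

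Three details need tightening.
\begin{itemize}
\item The Rosier integral needs $a(V_0(\ds(s)x))$ to be defined for every $s\in\R$. Positivity of $V$ also uses that this quantity equals $1$ for large $s$. So $V_0$ cannot be ``the local Lyapunov function'' you start from. Either use the global, proper, smooth Lyapunov function that the converse theorem gives once you have proved global strong asymptotic stability, or extend $a\circ V_0$ by $1$ outside a compact sublevel set lying in the domain of $V_0$.
\item The criterion you invoke asks for $V\in C^1$ on a neighborhood of the origin, but your construction only gives $C^1$ on $\R^n\setminus\{0\}$. Take $k$ larger than the largest real part of the eigenvalues of $\Gh_{\ds}$; homogeneity of $\nabla V$ then forces $\nabla V(x)\to 0$ as $x\to 0$. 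Alternatively, note that the comparison argument only needs $V$ continuous at $0$.
\item In the fallback argument, $T_0$ must be uniform over the whole closed unit homogeneous ball, not just over $S$. After the first halving the state is no longer on a sphere.
\end{itemize}
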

\subsection{On the Matrix Equation $XA - AX = X$}
Let us consider the folowing matrix equation. 
\begin{equation}
    XA - AX = X \quad \text{for} \quad  A,X\in\R^{n\times n}.
    \label{eq:matrix_equation}
\end{equation}  
\begin{lemma}(see \cite{zimenk2018generalized,Burde2005})\label{lem:Matrix}
Every matrix solution $X \in \mathbb{R}^{n \times n}$ of the equation \eqref{eq:matrix_equation} is nilpotent and hence satisfies $X^n = 0$.
\end{lemma}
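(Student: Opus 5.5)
We show directly that any solution $X$ of \eqref{eq:matrix_equation} satisfies $\operatorname{tr}(X^k)=0$ for every $k\geq 1$, and then conclude nilpotency. The key observation is that \eqref{eq:matrix_equation} says $X = [X,A]$, i.e. $X$ is an eigenvector with eigenvalue $1$ of the linear operator $\mathrm{ad}_{-A}:Y\mapsto YA-AY$ on $\R^{n\times n}$. This operator is a derivation of the matrix product:
\[
\mathrm{ad}_{-A}(YZ)=YZA-AYZ=(YA-AY)Z+Y(ZA-AZ).
\]

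\textbf{Step 1 (induction on $k$).} We show $X^kA-AX^k=kX^k$ for all $k\geq1$. The case $k=1$ is the hypothesis. If it holds for $k$, the derivation rule gives
\[
X^{k+1}A-AX^{k+1}=(X^kA-AX^k)X+X^k(XA-AX)=kX^{k+1}+X^{k+1}.
\]

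\textbf{Step 2 (traces vanish).} Take the trace of the identity from Step 1. Since $\operatorname{tr}(X^kA)=\operatorname{tr}(AX^k)$, the left-hand side has zero trace, so $k\,\operatorname{tr}(X^k)=0$. Hence $\operatorname{tr}(X^k)=0$ for all $k\geq1$.

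\textbf{Step 3 (nilpotency).} Work over $\mathbb{C}$ and let $\lambda_1,\dots,\lambda_n$ be the eigenvalues of $X$. By Step 2 the power sums $p_k=\sum_i\lambda_i^k$ vanish for $k=1,\dots,n$. Newton's identities, valid over a field of characteristic zero, then make every elementary symmetric polynomial of the $\lambda_i$ zero. So the characteristic polynomial of $X$ is $\lambda^n$, and by Cayley--Hamilton $X^n=0$.

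An alternative for Step 3 avoids Newton's identities. Step 1 shows $\mathrm{ad}_{-A}$ has eigenvalue $k$ on $X^k$ whenever $X^k\neq 0$. An operator on a finite-dimensional space has only finitely many eigenvalues, so $X^k=0$ for some $k$. This gives nilpotency, and $X^n=0$ follows from the standard fact that nilpotent $n\times n$ matrices have index at most $n$. Either route works; the trace route gives the bound $X^n=0$ directly.

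No step is a real obstacle. The main point to check is Step 3: the passage from vanishing traces to nilpotency needs characteristic zero, which holds here, and we pass to $\mathbb{C}$ so that the eigenvalues exist.
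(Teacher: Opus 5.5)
Your proof is correct. The derivation identity gives $X^kA-AX^k=kX^k$, and taking traces forces $\operatorname{tr}(X^k)=0$ for every $k\ge 1$. Newton's identities in characteristic zero, together with Cayley--Hamilton, then give $X^n=0$. Your eigenvalue-counting alternative for Step~3 is also sound.

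The paper gives no proof of this lemma of its own; it only cites \cite{zimenk2018generalized,Burde2005}. Your trace argument is the classical one behind that citation. It is the special case of Jacobson's lemma in which $[X,A]=X$ trivially commutes with $X$.

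There is also a route that avoids traces. Rewrite the equation as $AX=X(A-I_n)$, so that $(A-(\lambda-1)I_n)^mX=X(A-\lambda I_n)^m$. Hence $X$ maps the generalized eigenspace of $A$ for $\lambda$ into the one for $\lambda-1$. Iterating, $X^k$ sends it into the one for $\lambda-k$, which is trivial as soon as the string $\lambda,\lambda-1,\dots,\lambda-k$ leaves the spectrum of $A$. Since $A$ has $d\le n$ distinct eigenvalues, this happens for some $k\le d$, so $X^d=0$.

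That route costs a generalized-eigenspace decomposition over $\mathbb{C}$. In return it gives a sharper nilpotency-index bound depending on the spectrum of $A$, and the structural fact that $X$ shifts the spectrum of $A$ by one. This is the same eigenvalue-shifting mechanism the paper uses when it proves that $\Gh_0-I_n$ is invertible.
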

Equation~\eqref{eq:matrix_equation} can be viewed as a particular case of the Sylvester matrix equation, whose general solution is available in~\cite{Gantmacher1998}. In the context of the inverse problem, that is, finding the matrix $A$ for a prescribed matrix $X$, the following result applies.
\begin{lemma}\label{lem:nilpotent_solution}\cite{zimenk2018generalized}
For any nilpotent matrix $X \in \mathbb{R}^{n \times n}$, equation~\eqref{eq:matrix_equation} has a solution.
\end{lemma}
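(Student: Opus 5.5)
We need to show: for any nilpotent $X$, there is a real $A$ with $XA - AX = X$, i.e. $[X,A]=X$, equivalently $[A,X]=-X$.

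\textbf{Reduction to Jordan form.} A real nilpotent matrix is similar over $\R$ to its Jordan form, $X = T J T^{-1}$, where $J=\mathrm{diag}(J_{k_1},\dots,J_{k_r})$ and each $J_{k}$ is the $k\times k$ shift matrix with ones on the superdiagonal. The equation is similarity invariant: if $JA_0 - A_0J = J$, then $A = T A_0 T^{-1}$ satisfies
\[
XA - AX = T(JA_0 - A_0J)T^{-1} = TJT^{-1} = X .
\]
So it suffices to solve the equation for $J$. Since $J$ is block diagonal, taking $A_0$ block diagonal with blocks $A_i$ reduces the problem to solving $J_kA_k - A_kJ_k = J_k$ for a single Jordan block.

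\textbf{Single block.} Take the diagonal matrix $D_k=\mathrm{diag}(1,2,\dots,k)$ and set $A_k=-D_k$. Since $J_k$ has entries $(J_k)_{i,i+1}=1$, we get $(J_kD_k)_{i,i+1}=i+1$ and $(D_kJ_k)_{i,i+1}=i$. Hence $J_kD_k-D_kJ_k=J_k$, and therefore
\[
J_kA_k - A_kJ_k = -(J_kD_k-D_kJ_k) = -J_k ,
\]
which has the wrong sign. The correct choice is therefore $A_k = D_k$, giving $J_kA_k - A_kJ_k = J_kD_k - D_kJ_k = J_k$ as required.

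\textbf{Conclusion.} Assembling the blocks, $A = T\,\mathrm{diag}(D_{k_1},\dots,D_{k_r})\,T^{-1}$ is a solution of \eqref{eq:matrix_equation}. The case $X=0$ is covered trivially, since any $A$, for instance $A=0$, works. If desired, one may also add any matrix commuting with $X$, or shift $A$ by a multiple of $I_n$, to obtain further solutions.

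The only real point of care is the sign convention, which the explicit index computation settles. The remaining check is that the Jordan similarity $T$ can be taken real, which holds because all eigenvalues of a nilpotent matrix are real (equal to $0$).
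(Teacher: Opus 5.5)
The paper gives no proof of this lemma and only defers to the cited work of Zimenko et al. Your argument is correct and complete, and it is the standard constructive proof of this fact.

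It proceeds as follows. First, take a real similarity $X=TJT^{-1}$ to the nilpotent Jordan form; this is legitimate because all eigenvalues are $0$. Then reduce to single Jordan blocks via the block-diagonal structure. Finally, choose $A=T\,\mathrm{diag}(D_{k_1},\dots,D_{k_r})\,T^{-1}$ with $D_k=\mathrm{diag}(1,\dots,k)$. The entry computation $(J_kD_k-D_kJ_k)_{i,i+1}=(i+1)-i=1$ then gives $J_kD_k-D_kJ_k=J_k$.

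The only thing to clean up is the false start with $A_k=-D_k$ in the single-block step. State $A_k=D_k$ directly.
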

\subsection{Stabilization of Linear MIMO Systems}
Consider a linear system:
\begin{equation} \label{eq:linear}
    \dot{x}(t) = A x(t) + B u(t),
\end{equation}
where $x(t)\in\mathbb{R}^n$ and $u(t)\in\mathbb{R}^m$ are the state and the control,    \( A \in \mathbb{R}^{n \times n} \) and \( B \in \mathbb{R}^{n \times m} \) are system and input matrices. The pair \( \{A, B\} \) is controllable.  The state vector \( x(t) \) is assumed to be fully measurable.
\begin{theorem}  \label{thm:2020} 
(Inspired by \cite{Polyakov2015a,Polyakov2025,Zimenko2020a,zimenk2018generalized})\\
If (\ref{eq:linear})  is \( \ds \)-homogeneously stabilizable with a degree \( \mu \neq 0 \), then it can be stabilizable for any degree. The control law can be defined,  in the form:
\[
u(x) = K_0 x + \|x\|_\ds^{\mu(1+\gamma)+\varepsilon} K \ds(-\ln \|x\|_\ds) x,
\]
where $\gamma\in\mathbb{R}$, \( K_0 \in\mathbb{R}^{m\times n}\), \( \varepsilon>0,\) and \( K \in\mathbb{R}^{m\times n} \) are determined by the following conditions:
\begin{itemize}
    \item \( K_0 = Y_0 (L-(\gamma+1) I_n)^{-1} \) for some \( L \in \mathbb{R}^{n \times n}, \, Y_0 \in \mathbb{R}^{m \times n}\), subject to:
    \[
    A L - L A - A + B Y_0 = 0, \ (L - \gamma I_n) B = 0, \ L - (\gamma + 1) I_n < 0.
    \]
    \item The matrix \( K =YX^{-1}  \) satisfies for $\delta,\eta>0$:
    \begin{gather*}
    (A + B K_0) X + X (A + B K_0)^\top + B Y + Y^\top B^\top + \delta X \leq 0, \\
    \quad X > 0, \quad \eta X \geq \mu L X + \mu X L^\top + 2 \varepsilon X > 0.
    \end{gather*}
\end{itemize}
\end{theorem}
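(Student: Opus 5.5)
The plan is to show that the closed loop $\dot x=(A+BK_0)x+\|x\|_\ds^{\mu(1+\gamma)+\varepsilon}BK\ds(-\ln\|x\|_\ds)x$ is $\ds$-homogeneous of degree $\mu$ for the dilation generated by $\Gh_\ds=\mu L+\varepsilon I_n$. The canonical homogeneous norm induced by $\|\cdot\|_P$ with $P=X^{-1}$ will then serve as a Lyapunov function satisfying $\tfrac{d}{dt}\|x\|_\ds\le -\tfrac{\delta}{\eta}\|x\|_\ds^{1+\mu}$. The claim "stabilizable for any degree" follows because the construction is parametrized freely by $\mu$ once $L,Y_0$ exist. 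The existence of $L,Y_0$ comes from the assumed homogeneous stabilizability with some degree $\mu_0\neq0$: if $\tilde{\Gh}$ is the corresponding generator and $\tilde A_0$ the corresponding linear part, rescale $\tilde{\Gh}$ affinely to obtain $L$ with $\tilde A_0L-L\tilde A_0=\tilde A_0$. One may also invoke Lemma~\ref{lem:nilpotent_solution} together with the nilpotency from Lemma~\ref{lem:Matrix}.

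\emph{Step 1 (algebra of $K_0$).} Set $A_0=A+BK_0$. I expect the main obstacle to be this step, namely checking that the choice of $K_0$ turns the given matrix equation into the homogeneity identity for $A_0$. Using $LB=\gamma B$, I will compute
\[
A_0L-LA_0=AL-LA+BK_0(L-\gamma I_n).
\]
Then I will use $K_0(L-\gamma I_n)=Y_0(L-(\gamma+1)I_n)^{-1}\big((L-(\gamma+1)I_n)+I_n\big)=Y_0+K_0$. Together with $AL-LA=A-BY_0$, this gives $A_0L-LA_0=A_0$. Invertibility of $L-(\gamma+1)I_n$ is guaranteed by the condition $L-(\gamma+1)I_n<0$. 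Consequently $A_0\Gh_\ds-\Gh_\ds A_0=\mu A_0$, which yields $A_0\ds(s)=\E^{\mu s}\ds(s)A_0$. I also obtain $\Gh_\ds B=(\mu\gamma+\varepsilon)B$, hence $\ds(s)B=\E^{(\mu\gamma+\varepsilon)s}B$.

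\emph{Step 2 (dilation and homogeneity).} The LMI $\eta X\ge \Gh_\ds X+X\Gh_\ds^\top>0$ with $X>0$ shows that $\Gh_\ds$ is anti-Hurwitz. Multiplying by $P=X^{-1}$ on both sides gives $P\Gh_\ds+\Gh_\ds^\top P\succ0$, so $\ds$ is a dilation that is monotone with respect to $\|\cdot\|_P$, and $\|x\|_\ds$ is well defined and $\|\ds(s)x\|_\ds=\E^s\|x\|_\ds$. With Step 1, I will check the homogeneity of the nonlinear term. Replacing $x$ by $\ds(s)x$ multiplies it by $\E^{(\mu(1+\gamma)+\varepsilon)s}$, and $\E^{\mu s}\ds(s)B=\E^{(\mu+\mu\gamma+\varepsilon)s}B$. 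These factors coincide, so the whole vector field $f$ is $\ds$-homogeneous of degree $\mu$.

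\emph{Step 3 (Lyapunov estimate).} Writing $z=\ds(-\ln\|x\|_\ds)x\in S$, homogeneity gives $f(x)=\|x\|_\ds^{\mu}\ds(\ln\|x\|_\ds)(A_0+BK)z$. Differentiating the implicit relation $\|\ds(-\ln V)x\|_P=1$ with $V=\|x\|_\ds$ yields
\[
\dot V=V^{1+\mu}\,\frac{z^\top P(A_0+BK)z}{z^\top P\Gh_\ds z}.
\]
Pre- and post-multiplying the first LMI by $P$, with $Y=KX$, gives $2z^\top P(A_0+BK)z\le-\delta z^\top Pz=-\delta$. The second LMI gives $2z^\top P\Gh_\ds z\le\eta$. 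Hence $\dot V\le-\tfrac{\delta}{\eta}V^{1+\mu}$. Since $V$ is positive definite, radially unbounded and $C^1$ away from the origin, the Lyapunov theorem of Section~III (applied with $\alpha=1+\mu$) gives exponential stability for $\mu=0$, global finite-time stability for $-1\le\mu<0$ (for more negative $\mu$ one can use Theorem~\ref{thm:Orlov} instead), and nearly fixed-time stability for $\mu>0$.
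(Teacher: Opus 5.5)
Your Steps~1--3 are correct. The paper only quotes Theorem~\ref{thm:2020} from the literature. Its proof of the switched analogue, Theorem~\ref{theorem:free_disturbances} (the case $\gamma=0$, $\varepsilon=1$, $\Gh_\ds=I_n+\mu\Gh_0$), runs exactly your argument:
\begin{itemize}
\item the commutator identity $A_0L-LA_0=A_0$, obtained via $K_0(L-\gamma I_n)=Y_0+K_0$;
\item the scalar action $\ds(s)B=\E^{(\mu\gamma+\varepsilon)s}B$;
\item the implicit-differentiation formula for $\|x\|_\ds$;
\item the congruence of both matrix inequalities with $P=X^{-1}$.
\end{itemize}
Your reading $\Gh_\ds=\mu L+\varepsilon I_n$ and the rate $\delta/\eta$ are right. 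So the implication ``conditions hold $\Rightarrow$ the closed loop is $\ds$-homogeneous of degree $\mu$ and stable'' is fully proved.

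The gap is in the headline claim that the plant is stabilizable for \emph{every} degree, i.e.\ that the conditions are feasible for each $\mu$. Your construction of $L$ from the degree-$\mu_0$ stabilizer presupposes two things: that this stabilizer contains a gain $\tilde K_0$ with $A+B\tilde K_0$ itself $\tilde\ds$-homogeneous, and that $\tilde\Gh B$ is a scalar multiple of $B$. Neither follows from bare homogeneous stabilizability. Projecting the homogeneity identity of $Ax+Bu(x)$ onto a left annihilator of $B$ yields at best $\tilde\Gh B=BM$ with a matrix $M\in\R^{m\times m}$. So no affine rescaling of $\tilde\Gh$ is guaranteed to satisfy $(L-\gamma I_n)B=0$. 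Lemma~\ref{lem:nilpotent_solution} does not close this either: it gives some $L$ with $\tilde A_0L-L\tilde A_0=\tilde A_0$ but imposes nothing on $LB$. Yet $(L-\gamma I_n)B=0$ is precisely what your Step~1 and the scaling of the nonlinear term rely on.

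A clean repair bypasses the hypothesis and uses controllability of $\{A,B\}$, which is a standing assumption for \eqref{eq:linear}.
\begin{itemize}
\item Work in Brunovsky coordinates, $T(A+BF)T^{-1}=A_c$, $TBR=B_c$. The block-diagonal matrix $G_c$ with blocks $\mathrm{diag}(1-k_j,\dots,-1,0)$ satisfies $A_cG_c-G_cA_c=A_c$ and $G_cB_c=0$.
\item Transforming back, $G_0=T^{-1}G_cT$ and $Y_0=F(G_0-I_n)$ satisfy $AG_0-G_0A+BY_0=A$ and $G_0B=0$. Then $L=G_0+\gamma I_n$ meets the first two conditions for any $\gamma$.
\item Running the eigenvector-chain argument from the paper's proof of Theorem~\ref{theorem:free_disturbances} for an arbitrary eigenvalue shows $\mathrm{spec}(G_0)\subset\{0,-1,\dots,-(n-1)\}$. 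Hence $L-(\gamma+1)I_n=G_0-I_n$ is invertible, which is all Step~1 uses.
\end{itemize}
Finally, ``parametrized freely by $\mu$'' needs one more line. The first inequality does not involve $\mu$ and is feasible because $(A+BK_0,B)$ is controllable. For the resulting $X$, $\mu(LX+XL^\top)+2\varepsilon X\succ0$ once $\varepsilon$ is large enough, after which $\eta$ is taken large.
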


\section{Main result}
We introduce the following definition of a $\ds$-homogeneous switched system.
\begin{definition}[$\ds$-homogeneous switched system]\label{def:ds-homogeneous}
Consider the switched system
\begin{equation}\label{eq:ds_switched}
\dot{x}(t)=f_{\sigma(t)}(x), \quad f_{\sigma} \in C(\mathbb{R}^n;\mathbb{R}^n), \quad x(0)=x_0,
\end{equation}
where \(\sigma(t)\) is a switching signal taking values in a finite index set.
Let \(x(t,x_0,\sigma)\) denote the solution of the switched system under an admissible switching signal \(\sigma\) with initial condition \(x_0\).
The system~\eqref{eq:ds_switched} is said to be $\ds$-homogeneous of degree \(\mu\) if, for all
\(s>0\), all initial conditions \(x_0\), all admissible switching signals
\(\sigma(\cdot)\), and all \(t \ge 0\), the corresponding solution satisfies
\begin{equation}\label{eq:ds_solutions_switched}
x\bigl(t,\, \ds(s)x_0,\, \sigma_s\bigr)
= \ds(s)\, x\bigl(\E^{\mu s} t,\, x_0,\, \sigma\bigr),
\end{equation}
where \(\ds(s)\) denotes the dilation operator and \(\sigma_s(t)=\sigma(\E^{\mu s} t)\)
is the time-rescaled switching signal.
\end{definition}

To formulate our main results we need the following auxiliary statements. 
\begin{theorem}
\label{thm:switched-homogeneous}
If each subsystem of the switched system~\eqref{eq:ds_switched} is \(\ds\)-homogeneous of degree \(\mu\) in the sense of Definition~\ref{def:vector_fied}, then the switched system is \(\ds\)-homogeneous of degree \(\mu\) in the sense of Definition~\ref{def:ds-homogeneous}.
\end{theorem}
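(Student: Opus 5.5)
The plan is to prove the identity \eqref{eq:ds_solutions_switched} by showing that its right-hand side, viewed as a function of $t$, solves the switched system driven by the rescaled signal $\sigma_s$ from the initial state $\ds(s)x_0$. We then conclude by uniqueness, or, if uniqueness is not assumed, by working with the solution sets in Filippov's sense.

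Fix $s$, $x_0$ and an admissible $\sigma$, and set $x(\cdot)=x(\cdot,x_0,\sigma)$. Define
\[
y(t)=\ds(s)\,x(\E^{\mu s}t).
\]
Since $\ds(0)=I_n$, we have $y(0)=\ds(s)x_0$. On any interval where $\sigma$ is constant, $x$ is absolutely continuous and $\dot x(\tau)=f_{\sigma(\tau)}(x(\tau))$ almost everywhere. The chain rule then gives
\[
\dot y(t)=\E^{\mu s}\ds(s)\,f_{\sigma(\E^{\mu s}t)}\bigl(x(\E^{\mu s}t)\bigr).
\]
Applying Definition~\ref{def:vector_fied} mode by mode, $\E^{\mu s}\ds(s)f_i(z)=f_i(\ds(s)z)$, so
\[
\dot y(t)=f_{\sigma_s(t)}\bigl(\ds(s)x(\E^{\mu s}t)\bigr)=f_{\sigma_s(t)}(y(t))
\]
for almost all $t$. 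The switching instants of $\sigma_s$ are $\E^{-\mu s}t_k$, where $t_k$ are those of $\sigma$. Hence $\sigma_s$ is again piecewise constant with finitely many switches on compact intervals, and a dwell time $\tau_d$ becomes $\E^{-\mu s}\tau_d$. So $\sigma_s$ is admissible, and we note that the admissible class must be closed under this time rescaling; this holds for arbitrary switching. Finally, $y$ is continuous across switching instants because $x$ is. We concatenate the intervals between consecutive switches to obtain the property on all of $t\ge0$.

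The main obstacle is identifying $y$ with $x(t,\ds(s)x_0,\sigma_s)$. If the modes $f_i$ guarantee unique solutions, as they do under the standing assumption stated for \eqref{eq:SN}, then $y$ coincides with that solution and \eqref{eq:ds_solutions_switched} follows. If the $f_i$ are merely continuous, I would state the result for solution sets instead: the map $x(\cdot)\mapsto \ds(s)x(\E^{\mu s}\cdot)$ sends solutions to solutions. Its inverse is the same construction with $s$ replaced by $-s$, since $\ds(-s)=\ds(s)^{-1}$ and $(\sigma_s)_{-s}=\sigma$. The map is therefore a bijection between solution sets, which is the natural meaning of \eqref{eq:ds_solutions_switched}. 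The same argument with $-s$ also covers negative $s$, although the definition only requires $s>0$.
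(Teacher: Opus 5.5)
Your proposal is correct and follows the paper's proof: define $y(t)=\ds(s)\,x(\E^{\mu s}t,x_0,\sigma)$, use the chain rule and mode-wise homogeneity to get $\dot y(t)=f_{\sigma_s(t)}(y(t))$ with $y(0)=\ds(s)x_0$, and conclude by uniqueness. (The initial condition holds because $x(0)=x_0$, not because $\ds(0)=I_n$.) Your extra remarks go beyond the paper and are sound: the admissible class must be closed under $t_k\mapsto \E^{-\mu s}t_k$, and with merely continuous $f_\sigma$ uniqueness is an actual assumption (the paper simply cites uniqueness of Carath\'eodory solutions), so without it a solution-set version is the accurate statement.
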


\begin{IEEEproof}
Fix \(s>0\),  the time-rescaled switching signal $
\sigma_s(t)$ 
and the scaled trajectory
\[
y(t) := \ds(s)\, x(\E^{\mu s} t, x_0, \sigma), \quad t\ge 0.
\]

Since \(x(\cdot,x_0,\sigma)\) is absolutely continuous, \(y(t)\) is also absolutely continuous. By the chain rule, for almost all \(t\ge 0\),
\begin{gather*}
\dot y(t) = \E^{\mu s}\, \ds(s)\, \frac{\partial}{\partial\tau} x(\tau, x_0, \sigma) \Big|_{\tau = \E^{\mu s} t}
\\ = \E^{\mu s}\, \ds(s)\, f_{\sigma(\E^{\mu s} t)}\bigl(x(\E^{\mu s} t, x_0, \sigma)\bigr).
\end{gather*}

Using the \(\ds\)-homogeneity of \(f_{\sigma(\E^{\mu s} t)}\), we have
\[
\dot y(t) = f_{\sigma(\E^{\mu s} t)}\bigl(\ds(s) x(\E^{\mu s} t, x_0, \sigma)\bigr)
= f_{\sigma_s(t)}(y(t)).
\]

Moreover, the initial condition satisfies
$y(0) = \ds(s) x_0.$
Hence, \(y(t)\) is a solution of the switched system under \(\sigma_s\) with initial condition \(y(0)=\ds(s)x_0\). By the uniqueness of Carathéodory solutions \cite[Theorem 1.1, Chapter 1]{Filippov1988}, it follows that
\begin{equation}
y(t) = x(t, \ds(s)x_0, \sigma_s), \quad \forall t\ge 0.
\end{equation}

Substituting the definition of \(y(t)\), we obtain
\begin{equation}
x(t, \ds(s)x_0, \sigma_s) = \ds(s)\, x(\E^{\mu s} t, x_0, \sigma), \quad \forall t\ge 0,
\end{equation}
which shows that the switched system~\eqref{eq:ds_switched} is \(\ds\)-homogeneous of degree \(\mu\) according to Definition~\ref{def:ds-homogeneous}.
\end{IEEEproof}
Based on Theorem~\ref{thm:switched-homogeneous}, the following corollary provides the condition for a  switched linear  system \eqref{eq:switched_linear} to be \( \ds \)-homogeneous with the generator \( \Gh_\ds \) (i.e., each subsystem is \ds-homogeneous of the same degree).
\begin{corollary}\label{cor:linearswitching}
 Consider a  linear  switched system described by
\[
\dot{x}(t) = C_{\sigma(t)} x(t), \quad x \in \mathbb{R}^n, \quad C_{\sigma(t)} \in \mathbb{R}^{n \times n},
\]
where \( \sigma : \mathbb{R}_+ \to \Sigma \) is a piecewise constant switching signal. The system is \( \ds \)-homogeneous of degree \( \mu \in \mathbb{R} \) if and only if, for all \( i \in \Sigma \),
\begin{equation}\label{eq:commutator}
C_i \Gh_\ds - \Gh_\ds C_i = \mu C_i .
\end{equation}
\end{corollary}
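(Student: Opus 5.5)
The plan is to reduce both directions to the single-mode linear homogeneity identity $C_i\ds(s)=\E^{\mu s}\ds(s)C_i$ for all $s\in\mathbb{R}$, and then differentiate it (or integrate the commutator relation) using $\ds(s)=\E^{s\Gh_\ds}$.

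\emph{Sufficiency.} Assume \eqref{eq:commutator} holds for every $i\in\Sigma$. Fix $i$ and set $\Phi(s):=\E^{-\mu s}\ds(-s)C_i\ds(s)$. Differentiating gives
\[
\Phi'(s)=\E^{-\mu s}\ds(-s)\bigl(-\mu C_i-\Gh_\ds C_i+C_i\Gh_\ds\bigr)\ds(s).
\]
This vanishes by \eqref{eq:commutator}. Since $\Phi(0)=C_i$, we get $C_i\ds(s)=\E^{\mu s}\ds(s)C_i$. This says exactly that the linear vector field $x\mapsto C_ix$ is $\ds$-homogeneous of degree $\mu$ in the sense of Definition~\ref{def:vector_fied}. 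Each $C_i x$ is globally Lipschitz, so Carathéodory solutions under any piecewise constant $\sigma$ exist and are unique. Theorem~\ref{thm:switched-homogeneous} then yields homogeneity of the switched system in the sense of Definition~\ref{def:ds-homogeneous}.

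\emph{Necessity.} This is the step needing a little care, because Definition~\ref{def:ds-homogeneous} concerns solutions rather than vector fields. Fix $i\in\Sigma$ and take the admissible constant signal $\sigma\equiv i$. Then $\sigma_s\equiv i$ as well, and \eqref{eq:ds_solutions_switched} becomes
\[
\E^{tC_i}\ds(s)x_0=\ds(s)\E^{\E^{\mu s}tC_i}x_0 \qquad \text{for all } x_0,\; t\ge0,\; s>0.
\]
Differentiating in $t$ at $t=0$ gives $C_i\ds(s)=\E^{\mu s}\ds(s)C_i$ for all $s>0$. Differentiating this in $s$ and letting $s\to0^+$ gives $C_i\Gh_\ds=\mu C_i+\Gh_\ds C_i$, which is \eqref{eq:commutator}.

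The main obstacle is only the bookkeeping in the necessity part. One must make sure constant switching signals are admissible, which holds since every piecewise constant signal is allowed. One must also note that the definition only uses $s>0$, so the final derivative is one-sided at $s=0$; this is legitimate because both sides are smooth in $s$.
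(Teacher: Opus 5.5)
Your proof is correct and follows essentially the same route as the paper: you reduce each mode to the identity $C_i\ds(s)=\E^{\mu s}\ds(s)C_i$, invoke Theorem~\ref{thm:switched-homogeneous} for sufficiency, and differentiate in $s$ at $s=0$ for necessity. Your version is more careful than the paper's in two places: you justify the exponential identity via $\Phi(s)$, and you explicitly pass from the trajectory-based Definition~\ref{def:ds-homogeneous} to the vector-field identity using constant switching signals, a step the paper simply asserts.
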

\begin{IEEEproof}
\emph{Sufficiency.} Assume that, for each mode $i \in \Sigma$, the matrix $C_i$ satisfies~\eqref{eq:commutator} then, we obtain
\[
C_i \E^{\Gh_\ds s}
= \E^{(\Gh_\ds + \mu I_n)s} C_i,
\qquad \forall s \in \mathbb{R}.
\]
Hence, for each subsystem $\dot{x} = C_i x$, the identity
\[
C_i \ds(s) = \E^{\mu s} \ds(s) C_i
\]
holds for all $s \in \mathbb{R}$, which implies that each subsystem is
$\ds$-homogeneous of degree $\mu$. Since the homogeneity property is preserved under arbitrary switching, the switched system is $\ds$-homogeneous of degree $\mu$.

\emph{Necessity.}
Conversely, assume that the switched system is $\ds$-homogeneous of degree $\mu$.
Then, for each $i \in \Sigma$, the corresponding subsystem $\dot{x} = C_i x$
satisfies
\[
C_i \ds(s) = \E^{\mu s} \ds(s) C_i,
\qquad \forall s \in \mathbb{R}.
\]
Define
\[
M_i(s) := C_i \E^{\Gh_\ds s} - \E^{(\Gh_\ds + \mu I_n)s} C_i.
\]
By homogeneity, $M_i(s) = 0$ for all $s \in \mathbb{R}$. Differentiating with respect to $s$ yields
\[
\frac{d}{ds} M_i(s)
= C_i \Gh_\ds \E^{\Gh_\ds s}
- (\Gh_\ds + \mu I_n)\E^{(\Gh_\ds + \mu I_n)s} C_i = 0.
\]
Evaluating at $s = 0$ gives
\[
C_i \Gh_\ds - (\Gh_\ds + \mu I_n) C_i = 0,
\]
which is equivalent to~\eqref{eq:commutator}. This completes the proof.
\end{IEEEproof}

Based on Corollary \ref{cor:linearswitching}, it can be shown that the  switched system (\ref{eq:switched_linear}) with \( \omega(t,x) = 0 \) can be homogenized by a linear control \( u = K_{0,\sigma} x \).

\begin{lemma}\label{lemma:nilpotent}
If there exist matrices $K_{0,i} \in \mathbb{R}^{m \times n}$ and a matrix
$L_0 \in \mathbb{R}^{n \times n}$ such that the following conditions hold for all $i \in \Sigma$:
\begin{align}\label{eq:sylvester_conditions}
(A_i + B_i K_{0,i}) L_0 - L_0 (A_i + B_i K_{0,i}) = A_i + B_i K_{0,i},
\end{align}
then, for any $\mu \in \mathbb{R}$, there exists a matrix $\Gh_\ds$ such that the switched system~(\ref{eq:switched_linear})
with $\omega(t,x) = 0$ and $u = K_{0,\sigma} x$ is $\ds$-homogeneous of degree $\mu\neq 0$.
Conversely, if such a matrix $\Gh_\ds$ exists, then (\ref{eq:sylvester_conditions}) holds and each matrix $A_i + B_i K_{0,i}$ is nilpotent.
\end{lemma}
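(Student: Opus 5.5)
Write $C_i := A_i + B_iK_{0,i}$. Under $u=K_{0,\sigma}x$ and $\omega=0$, the closed loop is the linear switched system $\dot x = C_{\sigma(t)}x$. Corollary~\ref{cor:linearswitching} then reduces the statement to a purely algebraic one: the switched system is $\ds$-homogeneous of degree $\mu$ if and only if $C_i\Gh_\ds - \Gh_\ds C_i = \mu C_i$ for every $i\in\Sigma$. I will therefore work only with these commutator identities, exploiting the fact that they are linear in $\Gh_\ds$ and invariant under adding a multiple of the identity.

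\emph{Sufficiency.} Fix $\mu\neq 0$ and assume \eqref{eq:sylvester_conditions} holds, i.e. $C_iL_0 - L_0C_i = C_i$ for all $i$. Take $G_0 := \mu L_0$. Multiplying the identity by $\mu$ gives $C_iG_0 - G_0C_i = \mu C_i$.

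This $G_0$ need not be anti-Hurwitz, so it is not yet a valid dilation generator. Since $C_i(\beta I_n) - (\beta I_n)C_i = 0$, I will instead set
\[
\Gh_\ds := \mu L_0 + \beta I_n ,
\]
with $\beta > \max_{\lambda\in\operatorname{spec}(\mu L_0)} \bigl(-\Re(\lambda)\bigr)$. Every eigenvalue of $\Gh_\ds$ then has positive real part, so $\ds(s)=\E^{s\Gh_\ds}$ is a linear continuous dilation. The commutator identity $C_i\Gh_\ds - \Gh_\ds C_i = \mu C_i$ still holds for all $i$, and Corollary~\ref{cor:linearswitching} gives $\ds$-homogeneity of degree $\mu$. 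Through Theorem~\ref{thm:switched-homogeneous} this also yields homogeneity in the sense of Definition~\ref{def:ds-homogeneous}. The main obstacle I expect is precisely this anti-Hurwitz requirement, and the shift by $\beta I_n$ is what resolves it. Note that $\mu\neq 0$ is essential: for $\mu=0$ the identity is vacuous and carries no information about $L_0$.

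\emph{Converse.} Suppose an anti-Hurwitz $\Gh_\ds$ exists making the closed loop $\ds$-homogeneous of degree $\mu\neq 0$. By the necessity part of Corollary~\ref{cor:linearswitching}, $C_i\Gh_\ds - \Gh_\ds C_i = \mu C_i$ for all $i$. Setting $L_0 := \mu^{-1}\Gh_\ds$ recovers \eqref{eq:sylvester_conditions}. Each $C_i$ then solves $XA - AX = X$ with $X=C_i$ and $A=L_0$, so Lemma~\ref{lem:Matrix} shows that $C_i$ is nilpotent. I will also remark that the converse uses $\mu\neq 0$ in the division by $\mu$, consistent with the restriction $\mu \neq 0$ in the statement.
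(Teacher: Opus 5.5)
Your proof is correct and follows the paper's argument. You reduce via Corollary~\ref{cor:linearswitching} to the commutator identities, take $\Gh_\ds$ equal to $\mu L_0$ plus a multiple of the identity for sufficiency, and for the converse divide by $\mu$ and invoke Lemma~\ref{lem:Matrix}. The one difference is in the shift: the paper uses $\Gh_\ds=\mu L_0+I_n$, which is anti-Hurwitz only when $\mu$ is small relative to the spectrum of $L_0$, and its proof does not check this. Your shift $\beta I_n$ with $\beta$ large gives a valid dilation generator for every $\mu\neq 0$, which is what the statement actually asserts.
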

\begin{IEEEproof}
\emph{Sufficiency.}
Consider the closed-loop subsystem
\[
\dot{x} = (A_i + B_i K_{0,i}) x.
\]
According to Corollary~\ref{cor:linearswitching}, this subsystem is
$\ds$-homogeneous of degree $\mu$ if and only if
\begin{align}\label{eq:hom_cond_proof}
(A_i + B_i K_{0,i}) \Gh_\ds - \Gh_\ds (A_i + B_i K_{0,i}) = \mu (A_i + B_i K_{0,i}).
\end{align}
Define the generator matrix
\[
\Gh_\ds := \mu L_0 + I_n.
\]
Substituting such a $\Gh_\ds$ into~\eqref{eq:hom_cond_proof} and using~\eqref{eq:sylvester_conditions}, we obtain
\begin{gather*}
(A_i + B_i K_{0,i}) \Gh_\ds - \Gh_\ds (A_i + B_i K_{0,i}) 
= (A_i + B_i K_{0,i})(\mu L_0 + I_n) \\ - (\mu L_0 + I_n)(A_i + B_i K_{0,i}) 
= \mu \big[(A_i + B_i K_{0,i}) L_0 - L_0 (A_i + B_i K_{0,i})\big] \\
= \mu (A_i + B_i K_{0,i}),
\end{gather*}
which shows that condition~\eqref{eq:hom_cond_proof} holds for all $i \in \Sigma$.
Hence, the switched system is $\ds$-homogeneous of degree $\mu$ with generator $\Gh_\ds$.

\emph{Necessity.}
Assume that there exists a matrix $\Gh_\ds$ such that the switched system is $\ds$-homogeneous of degree $\mu$. Then, for each $i \in \Sigma$, condition~\eqref{eq:hom_cond_proof} holds. 
If $\mu \neq 0$, define $\bar{\Gh}_\ds := \frac{1}{\mu} \Gh_\ds$. Then
\[
(A_i + B_i K_{0,i}) \bar{\Gh}_\ds - \bar{\Gh}_\ds (A_i + B_i K_{0,i}) = A_i + B_i K_{0,i},
\]
which is exactly~\eqref{eq:sylvester_conditions}. By Lemma~\ref{lem:Matrix}, the existence of such a solution implies that $A_i + B_i K_{0,i}$ is nilpotent for all $i$.
\end{IEEEproof}

Further, these results are utilized for the control design of the switched  system (\ref{eq:switched_linear}).  First, the case of the common implicitly defined Lyapunov function is considered (as usual, in such a case the switching signal $\sigma(t)$ can be an arbitrary piecewise continuous function of time).
\begin{theorem}\label{theorem:free_disturbances}
Consider the switched linear  system \eqref{eq:switched_linear} with \(\omega(t,x) = 0\). Assume that
\begin{enumerate}
\item  The linear algebraic equation
\begin{align}\label{eq:homogeneous_eq0}
A_\sigma \Gh_0 - \Gh_0 A_\sigma + B_\sigma Y_{0,\sigma} &= A_\sigma, \quad \sigma \in\Sigma \\
\Gh_0 B_\sigma &= 0
\label{eq:homogeneous_eq}
\end{align}
has a solution \( Y\sz \in \mathbb{R}^{m \times n}, \Gh_0 \in \mathbb{R}^{n \times n} \), then for any solution of \eqref{eq:homogeneous_eq0} and \eqref{eq:homogeneous_eq}:
   \begin{itemize}
   
\item   The matrix
   \begin{equation}
   \Gh_\ds = I_n + \mu \Gh_0
   \label{eq:Gh_def}
   \end{equation}
   is anti-Hurwitz for $\mu\in \R$ being close enough to zero;
   \item The matrix \( \Gh_0 - I_n \) is invertible and the matrix
   \begin{equation}
   A\sz = A_\sg + B_\sg Y\sz (\Gh_0 - I_n)^{-1}
   \label{eq:A0_def}
   \end{equation}
   satisfies the following identities, for all $\sigma\in\Sigma$,
   \begin{equation}
   A\sz \Gh_\ds = (\Gh_\ds + \mu I_n) A\sz, \quad \Gh_\sg B_\sg = B_\sg;
   \label{eq:homogeneous_identities}
   \end{equation}
   \end{itemize} 
\item  The linear algebraic system
   \begin{align}\label{eq:homogeneous_identitiesf}
   \begin{aligned}
    X A_{0,\sigma}^{\top} + A_{0,\sigma} X + B_{\sigma} Y_\sigma + Y_\sigma^\top B_\sigma^\top \\ \quad 
    + \rho \left(\Gh_\ds X + X \Gh_\ds^{\top} \right) &\preceq 0, \;\forall\sigma\in\Sigma\\
     \Gh_\ds X + X \Gh_\ds^{\top} \succ0, \qquad 
     X \succ0,
     \end{aligned}
\end{align}
where \( K_{0,\sigma} =Y_{0,\sigma}(\Gh_0-I_n)^{-1}\)
has solution   \( X \in \mathbb{R}^{n \times n} \), \( Y_\sigma \in \mathbb{R}^{m \times n} \), for some \( \rho >0\). 
\end{enumerate}
 Then \begin{itemize}
\item the canonical homogeneous norm \( \| \cdot \|_{\ds} \) induced by
   \(
    \sqrt{x^\top P x}, \quad P =X^{-1}
   \) 
   is a (common) Lyapunov function for the closed-loop system \eqref{eq:switched_linear} with the feedback control
   \begin{equation}\label{eq:u}
    u(x) = K_{0,\sigma} x + \|x\|_\ds^{1+\mu} K_\sigma \ds(-\ln \|x\|)x,  
\end{equation}
where \( \ds(s) = \E^{s \Gh_\ds}, s \in \mathbb{R} \) is a linear dilation generated by \( \Gh_\ds \) and
   \begin{equation}
    K_{\sg} = Y_\sigma X^{-1}.
   \label{eq:K_matrices}
   \end{equation}
   Moreover,
   \begin{equation}
   \frac{d}{dt} \|x\|_{\ds} \leq -\rho \|x\|_{\ds}^{1+\mu}, \quad \forall x \neq 0.
   \label{eq:lyapunov_decay}
   \end{equation}

\item the control law \( u(x) \) satisfies
   \begin{equation}
   |u(x)| \leq |K\sz x| + \tilde{k}_\sigma \|x\|_{\ds}^{1+\mu}, \quad \forall x \in \mathbb{R}^n \setminus \{0\},
   \label{eq:control_bound}
   \end{equation}
   where
   \[
   \tilde{k}_\sigma = \sqrt{\lambda_{\max} (X^{1/2} K_{\sg}^\top K_{\sg} X^{1/2})}.
   \]
\item the closed-loop system \eqref{eq:switched_linear}, \eqref{eq:u} is \( \ds \)-homogeneous of degree \( \mu \) and:
\begin{itemize}
   \item globally uniformly finite-time stable for \( \mu \in [-1, 0) \), where
     \begin{equation}
     x(t) = 0, \quad \forall t \geq T(x_0) \leq t_0+\frac{\|x_0\|_{\ds}^{-\mu}}{-\mu \rho}
     \label{eq:finite_time_stability}
     \end{equation}
   \item \text{globally uniformly exponentially stable} for \( \mu = 0 \), where
     \begin{equation}
     \| x(t) \|_{\ds} \leq \E^{-\rho t} \|x_0\|_{\ds}, \quad \forall t \geq 0.
     \label{eq:exp_stability}
     \end{equation}
   \item \text{globally uniformly nearly fixed-time stable} for \( \mu > 0 \), where for any \( r > 0 \),
     \begin{equation}
     \| x(t) \|_{\ds} \leq r, \quad \forall t \geq \frac{1}{\mu \rho},
     \label{eq:nearly_fixed_time_stability}
     \end{equation}
     independently of \( x_0 \in \mathbb{R}^n \). 
     \end{itemize}
    \end{itemize} 
\end{theorem}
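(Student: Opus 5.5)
The proof runs through four steps: the algebraic identities, homogeneity of the closed loop, the implicit Lyapunov estimate, and integration of the resulting decay inequality.

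\textbf{Step 1: algebraic identities.} Since $\Gh_\ds=I_n+\mu\Gh_0$ depends continuously on $\mu$ and equals $I_n$ at $\mu=0$, its eigenvalues have positive real parts for $|\mu|$ small, so $\Gh_\ds$ is anti-Hurwitz. Multiplying $\Gh_0B_\sg=0$ by $\mu$ and adding $B_\sg$ gives $\Gh_\ds B_\sg=B_\sg$.

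Invertibility of $\Gh_0-I_n$ needs more care. From \eqref{eq:homogeneous_eq} we get $(\Gh_0-I_n)B_\sg=-B_\sg$, and \eqref{eq:homogeneous_eq0} can be written as $A_\sg(\Gh_0-I_n)-\Gh_0A_\sg+B_\sg Y\sz=0$. Suppose $v^\top(\Gh_0-I_n)=0$ for a left eigenvector $v$. Then $v^\top B_\sg=0$. Multiplying \eqref{eq:homogeneous_eq0} from the left by $v^\top$ gives $-v^\top\Gh_0A_\sg=0$, that is, $v^\top A_\sg=0$ because $v^\top\Gh_0=v^\top$. This contradicts controllability of $(A_\sg,B_\sg)$ by the PBH test (eigenvalue $0$). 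So $\Gh_0-I_n$ is invertible.

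With $K\sz=Y\sz(\Gh_0-I_n)^{-1}$, rewrite \eqref{eq:homogeneous_eq0} as
\[
A_\sg\Gh_0-\Gh_0A_\sg+B_\sg K\sz(\Gh_0-I_n)=A_\sg .
\]
Using $\Gh_0B_\sg=0$, so that $\Gh_0B_\sg K\sz=0$, this rearranges to $A\sz\Gh_0-\Gh_0A\sz=A\sz$. Multiplying by $\mu$ yields $A\sz\Gh_\ds=(\Gh_\ds+\mu I_n)A\sz$.

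\textbf{Step 2: homogeneity of the closed loop.} By Corollary~\ref{cor:linearswitching}, the linear part $A\sz x$ is $\ds$-homogeneous of degree $\mu$. The nonlinear term is $g_\sg(x)=\|x\|_\ds^{1+\mu}B_\sg K_\sg\ds(-\ln\|x\|_\ds)x$, where I read the argument of $\ds$ as the homogeneous norm. It satisfies
\[
g_\sg(\ds(s)x)=\E^{(1+\mu)s}B_\sg K_\sg\ds(-\ln\|x\|_\ds)x .
\]
Because $\Gh_\ds B_\sg=B_\sg$, we have $\ds(s)B_\sg=\E^{s}B_\sg$, and hence $g_\sg(\ds(s)x)=\E^{\mu s}\ds(s)g_\sg(x)$. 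Each mode is therefore $\ds$-homogeneous of degree $\mu$, and Theorem~\ref{thm:switched-homogeneous} gives homogeneity of the switched closed loop.

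\textbf{Step 3: Lyapunov estimate (main obstacle).} The canonical norm $V=\|x\|_\ds$ is defined implicitly by $\|\ds(-\ln V)x\|_P=1$. For $x\neq0$ the implicit function theorem gives
\[
\dot V=V\,\frac{z^\top P\ds(-\ln V)\dot x}{z^\top P\Gh_\ds z},\qquad z=\ds(-\ln V)x\in S .
\]
Substituting $\dot x=A\sz x+g_\sg(x)$ and applying the commutation identities $\ds(-\ln V)A\sz=V^{\mu}A\sz\ds(-\ln V)$ and $\ds(-\ln V)B_\sg=V^{-1}B_\sg$ gives
\[
\dot V=V^{1+\mu}\,\frac{z^\top P(A\sz+B_\sg K_\sg)z}{z^\top P\Gh_\ds z}.
\]
Congruence of \eqref{eq:homogeneous_identitiesf} with $P=X^{-1}$ gives
\[
P(A\sz+B_\sg K_\sg)+(\cdot)^\top\preceq-\rho\,(P\Gh_\ds+\Gh_\ds^\top P),
\]
so $\dot V\le-\rho V^{1+\mu}$ for every mode. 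Because the bound is uniform in $\sg$, $V$ is a common Lyapunov function under arbitrary switching. Absolute continuity of $V$ along Carath\'eodory solutions across switching instants makes the estimate valid almost everywhere.

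\textbf{Step 4: control bound and convergence rates.} The control bound follows from
\[
|K_\sg z|\le\sqrt{\lambda_{\max}(X^{1/2}K_\sg^\top K_\sg X^{1/2})}\,\|z\|_P ,
\]
together with $\|z\|_P=1$, writing $z=X^{1/2}w$ with $|w|=1$. The stability claims come from integrating $\dot V\le-\rho V^{1+\mu}$ by comparison. For $\mu<0$, $V^{-\mu}$ decreases at rate at least $-\mu\rho$, which gives the settling time in \eqref{eq:finite_time_stability}. For $\mu=0$ one gets the exponential bound \eqref{eq:exp_stability}. For $\mu>0$,
\[
V(t)^{-\mu}\ge\mu\rho t ,
\]
which gives a uniform bound for $t\ge1/(\mu\rho)$ that is independent of $x_0$. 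Radial unboundedness and positive definiteness of $\|\cdot\|_\ds$ make all these statements global and uniform.
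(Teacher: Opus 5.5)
\textbf{Gap in Step 1: invertibility of $\Gh_0-I_n$.} Your Steps 2--4 match the paper's argument: the implicit derivative of $\|x\|_\ds$, the rules $\ds(-\ln V)A_{0,\sigma}=V^{\mu}A_{0,\sigma}\ds(-\ln V)$ and $\ds(s)B_\sigma=\E^{s}B_\sigma$, congruence of the LMI with $P=X^{-1}$, the $X^{1/2}$ control bound, and comparison integration. Your derivation of $A_{0,\sigma}\Gh_\ds=(\Gh_\ds+\mu I_n)A_{0,\sigma}$ is also correct.

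The invertibility argument, however, fails. When you left-multiply $A_\sigma(\Gh_0-I_n)-\Gh_0A_\sigma+B_\sigma Y_{0,\sigma}=0$ by $v^\top$, the first term does not vanish. The relation $v^\top(\Gh_0-I_n)=0$ only kills $\Gh_0-I_n$ when $v^\top$ stands directly next to it, and here $A_\sigma$ sits in between. What you actually obtain is $(v^\top A_\sigma)\Gh_0=2\,v^\top A_\sigma$. So $v^\top A_\sigma$ is either zero or a new left eigenvector of $\Gh_0$ with eigenvalue $2$, and nothing forces it to vanish.

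Concretely, take $A=\begin{bmatrix}0&1\\0&0\end{bmatrix}$, $B=0$, $\Gh_0=\mathrm{diag}(1,2)$ and $v=e_1$. Then \eqref{eq:homogeneous_eq0}, \eqref{eq:homogeneous_eq}, $v^\top\Gh_0=v^\top$ and $v^\top B=0$ all hold, yet $v^\top A=e_2^\top\neq 0$. Your claimed identity $-v^\top\Gh_0A=0$ is therefore not a consequence of the facts you used. Controllability has to enter through more than a single PBH test at $\lambda=0$.

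\textbf{Repair.} Use the chain argument from the paper. Put $w_k=(A_\sigma^\top)^k v$ and suppose $w_k^\top\Gh_0=(k+1)w_k^\top$. Then $w_k^\top B_\sigma=\frac{1}{k+1}w_k^\top\Gh_0B_\sigma=0$. Left-multiplying \eqref{eq:homogeneous_eq0} by $w_k^\top$ then gives $w_{k+1}^\top\Gh_0=(k+2)w_{k+1}^\top$. By induction, $v^\top A_\sigma^kB_\sigma=0$ for all $k\ge 0$, which contradicts the Kalman rank condition. The paper phrases the same idea as $n$ nonzero left eigenvectors $h_i=(A_\sigma^\top)^{i-1}h_1$ with distinct eigenvalues $i$, all satisfying $h_i^\top B_\sigma=0$. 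With this replacement the rest of your proof stands.

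One side remark on the case $\mu>0$. Your estimate gives $\|x(t)\|_\ds\le(\mu\rho t)^{-1/\mu}$, so the level $r$ is guaranteed only for $t\ge 1/(\mu\rho r^{\mu})$. The time $1/(\mu\rho)$ in \eqref{eq:nearly_fixed_time_stability} is justified only for $r=1$.
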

\begin{IEEEproof}
The proof of Theorem \ref{theorem:free_disturbances} follows the main steps of Theorem~\ref{thm:2020} in~\cite{Zimenko2020a} and Theorem~9.1 in~\cite{Polyakov2025}, which establishes the feasibility of the algebraic equation system given by \eqref{eq:homogeneous_eq0}--\eqref{eq:homogeneous_identitiesf}. Additionally, the stability properties of the switched system \eqref{eq:switched_linear} are derived using Theorem~\ref{thm:Orlov}.
\subsubsection*{1)} If the algebraic system \eqref{eq:homogeneous_eq0}--\eqref{eq:homogeneous_identitiesf} is feasible and  have a (common) solution $\{\Gh_0, Y_{0,\sg}\}$; then the matrix $I_n - \Gh_0$ is always invertible. Indeed, if we suppose the contrary then there exists a left eigenvector $h_1\in \R^{n}$ such that
\[
h_1^{\top}\Gh_0=h_1^{\top}.
\]
Since $\Gh_0$ is the solution of \eqref{eq:homogeneous_eq0}--\eqref{eq:homogeneous_eq} then  
\[
h_1^{\top}\Gh_0B_{\sigma}=h_1^{\top}B_{\sigma}=0
\]
and 
$$
h_1^{\top}A_{\sigma}\Gh_0=2h_1^{\top}A_\sg,
$$  i.e., $h_2=A_{\sigma}^{\top}h_1$ is the  left eigenvector of $\Gh_0$ too. Notice that $h_2\neq 0$ (otherwise $h_1^{\top}A_{\sigma}=0$ and $h_1^{\top}B_{\sigma}=0$ imply that the pair $\{A_{\sigma},B_{\sigma})$ is not controllable). Repeating the same considerations we derive that 
$h_i=(A_{\sigma}^{\top})^{i-1}h_1$ with $i=1,...,n$ are eigenvectors 
$
h_i^{\top}\Gh_0=ih_i
$
of $\Gh_0$ such that 
\[
h_i^{\top} B_{\sigma}=0, i=1,2,...,n.
\]
Since the eigenvectors $h_i$ with  $i=1,2,...,n$ correspond to disjoint eigenvalues $\lambda_i=1,2,\dots,n,$ then they are linearly independent, and the above identity is possible only if $B_{\sigma}=0$. However, this contradicts to the controlability of the pair $\{A_{\sigma},B_{\sigma}\}$.

The matrix $\Gh_\ds = I_n + \mu \Gh_0$ is anti-Hurwitz for $\mu$ close to zero. Moreover, if the matrix $I_n-\Gh_0$ is invertible then   $K_{0,\sg} = Y_{0,\sg} (\Gh_0 - I_n)^{-1}$ and $A_{0,\sg} = A_\sg + B_\sg K_{0,\sg}$, we have
\[
A_\sg \Gh_0 - \Gh_0 A_\sg + B_\sg K_{0,\sg} (\Gh_0 - I_n) = A_\sg, \quad \Gh_0 B_\sg = 0,
\]
which implies
\[
(A_\sg + B_\sg K_{0,\sg}) \Gh_0 = \Gh_0 A_\sg + \Gh_0 B_\sg K_{0,\sg} + A_\sg + B_\sg K_{0,\sg}.
\]
Thus,
\[
\mu A_{0,\sg} \Gh_0 = \mu (I_n + \Gh_0) A_{0,\sg},
\]
i.e., $A_{0,\sg}$ is $\ds$-homogeneous of degree $\mu$ (see Corollary \ref{cor:linearswitching}).
\subsubsection*{2)}
Controllability of the pair  $\{A_\sg, B_\sg\}$ implies the feasibility of the matrix inequalities    \eqref{eq:homogeneous_identitiesf}  for any given $ K_{0,\sigma} \in \mathbb{R}^{m \times n}$.  
 Therefore, a  solution $(X, Y_\sg )$ of the linear algebraic systems \eqref{eq:homogeneous_identitiesf}
always exists and can be constructed by means of the proposed algorithm (see, an example for fixed $\sigma$ in \cite{Polyakov2016,Polyakov2025}).

Take $1)$ and $2)$ to be verified. If the pair \((X, Y_\sigma)\) is a  solution of \eqref{eq:homogeneous_identitiesf}, then the dilation \(\ds(s) = \E^{s \Gh_\ds}\) is monotone with respect to the weighted Euclidean norm \(\|x\| = \sqrt{x^\top P x}\), where \(P = X^{-1}\). Let us show that \(\|\cdot\|_\ds\) is a Lyapunov function for the closed-loop system \eqref{eq:switched_linear} and \eqref{eq:u}. Indeed, it is globally proper, and using the following formula \cite{Polyakov2025}
\[
\frac{\partial \|x\|_\ds}{\partial x} = \|x\|_\ds  \frac{x^\top \ds^\top(-\ln \|x\|_\ds) P\ds(-\ln \|x\|_\ds)}{x^\top \ds^\top(-\ln \|x\|_\ds) P \Gh_\ds\ds(-\ln \|x\|_\ds) x}, \quad x \neq 0.
\]
Denoting $\pi_\ds(x) = \ds(-\ln \|x\|_\ds) x$ and using the equations in $1)$ and $2)$ of Theorem \ref{theorem:free_disturbances}, we have
\begin{gather*}
\frac{d}{dt} \|x\|_\ds = \frac{\partial \|x\|_\ds}{\partial x} \dot{x} 
= \|x\|_\ds  \frac{\pi_\ds(x)^\top P\ds(-\ln \|x\|_\ds)(A_\sg x+B_\sg u)}{\pi_\ds(x)^\top  P \Gh_\ds \pi_\ds(x)}\\ 
= \|x\|_\ds  \frac{\pi_\ds(x)^\top P\ds(-\ln \|x\|_\ds)(A_{0,\sg} x+\|x\|^{1+\mu}B_\sg K_\sg \ds(-\ln \|x\|_\ds) x)}{\pi_\ds(x)^\top  P \Gh_\ds \pi_\ds(x)},
\end{gather*}
Using (\ref{eq:homogeneous_identities}) and from Definition \ref{def:vector_fied} this means that
\[A_{0,\sg}\ds(s) =\E^{\mu s}\ds(s)A_{0,\sg},\] and \[\ds (s) B_\sg =\sum_{i=0}^{\infty} \frac{s^i \Gh_\ds^i B_\sg}{i!} = \E^s B_\sg \quad \text{for all} \quad s\in\R.\]
Hence, we have 
\begin{gather*}
\frac{d}{dt} \|x\|_\ds = \|x\|_\ds^{1+\mu} \frac{\pi^\top_\ds(x) P (A_{0,\sg}+ B_\sg K_\sg) \pi_\ds(x)}{\pi_\ds(x)^\top  P \Gh_\ds \pi_\ds(x)}\\
= \|x\|_\ds^{1+\mu} \frac{\pi^\top_\ds(x) P \left(A_{0,\sg}P^{-1} + B_\sg Y_\sg + P^{-1}A_{0,\sg}^\top+Y^\top_\sg B_\sg^\top\right) P \pi_\ds(x)}{\pi_\ds(x)^\top   \left(\Gh_\ds P^{-1} + P^{-1}\Gh_\ds^\top\right) P\pi_\ds(x)}
\end{gather*}
By using the inequalities in \eqref{eq:homogeneous_identitiesf}, we conclude that \begin{gather*}
\frac{d}{dt} \|x\|_\ds \leq -\rho\|x\|_\ds^{1+\mu}.
\end{gather*}
 Therefore, $\|x\|_\ds$ is a (common) Lyapunov function of the system, and the identity (\ref{eq:lyapunov_decay}) holds. For $\mu \neq 0$, the estimates      \eqref{eq:finite_time_stability},  \eqref{eq:exp_stability}, and 
     \eqref{eq:nearly_fixed_time_stability}
 follow from the identity (\ref{eq:lyapunov_decay}), which gives
\[
\|x(t)\|_d^{-\mu} \leq \|x_0\|_d^{-\mu} + \mu \rho t, \quad \forall t \geq 0 : \|x_0\|_d^{-\mu} + \mu \rho t \geq 0.
\]

Following \cite{Polyakov2025}, the canonical homogeneous norm $\|\cdot\|_\ds$ is analytic on $\mathbb{R}^n \setminus \{0\}$, then $u$ is analytic on $\mathbb{R}^n \setminus \{0\}$ as a composition of analytic functions. Moreover,
\[
\|\ds(-\ln \|x\|_\ds) x\| = \|\pi_\ds(x)\| = \sqrt{\pi_\ds^\top(x) X^{-1} \pi_\ds(x)} = 1,
\]
and for $x \neq 0$, we have
\begin{gather*}
|u(x)| \leq |K_{0,\sg} x| + \|x\|_\ds^{\mu+1} |K_\sg \pi_\ds(x)|  \\
\leq |K_{0,\sg} x| + \|x\|_\ds^{\mu+1} \sqrt{\pi_d^\top(x) K_\sg^\top K_\sg \pi_d(x)}\\
= |K_{0,\sg} x|  + \|x\|_\ds^{\mu+1} \sqrt{\pi^\top_\ds(x) X^{-\frac{1}{2}} X^{\frac{1}{2}} K_\sg^\top K_\sg X^{-\frac{1}{2}} X^{\frac{1}{2}} \pi_\ds(x)}\\
\leq  |K_{0,\sg} x|  + \|x\|_\ds^{\mu+1} \sqrt{\lambda_{\max} (X^{\frac{1}{2}} K_\sg^\top K_\sg  X^{\frac{1}{2}})}.
\end{gather*}
According to Theorem 7.5 of \cite{Polyakov2025}, any asymptotically stable homogeneous system with \( \mu < 0 \) (respectively, \( \mu > 0 \)) exhibits finite-time stability (respectively, nearly fixed-time stability), thereby completing the proof of Theorem \ref{theorem:free_disturbances}.
\end{IEEEproof}
\begin{remark}
For \( \mu = 0 \), the problem reduces to standard linear control design \cite{liberzon2003}. The homogeneous norm \( \|\cdot\|_\ds \), defined implicitly by (\ref{eq:u}), can be computed using numerical approches such as the bisection method \cite{Polyakov2016,Polyakov2025}.
Additionally, the feedback matrix \( K\sz \) can be selected using pole placement techniques.
\end{remark}

The following theorem investigates the behavior of the system in the presence of disturbances. Specifically, it extends the analysis to handle additive disturbances that affect the system dynamics, providing stability guarantees and performance bounds under these conditions.
\begin{theorem}\label{thm:disturbances}
Let all conditions of Theorem \ref{theorem:free_disturbances} be verified, and let \(\omega : \mathbb{R} \times \mathbb{R}^n \to \mathbb{R}^n \) be a locally bounded measurable function that  satisfies the following inequality
\begin{equation}\label{eq:dis_con}
\begin{aligned}
\frac{x^\top \ds^\top (- \ln \|x\|_\ds) P \ds(- \ln \|x\|_\ds) E_\sg \omega(t,x)}{ x^\top \ds^\top (- \ln \|x\|_\ds) P \Gh_\ds \ds (- \ln \|x\|_\ds) x }
\leq \kappa \|x\|_\ds^\mu, \\
\quad \forall x \in \mathbb{R}^n,
\end{aligned}
\end{equation}
and almost all $t\geq0$, for some \( \kappa \in (0, \rho) \), then for (\ref{eq:switched_linear}), (\ref{eq:u})
\begin{equation*}
    \frac{d}{dt} \|x\|_\ds \leq -\eta \|x\|_\ds ^{1+\mu}, \quad \eta = \rho - \kappa.
\end{equation*}

Consequently, the switched system (\ref{eq:switched_linear}) with the control (\ref{eq:u})  is:
\begin{itemize}
    \item \text{finite-time stable} for \( \mu < 0 \), and the solution satisfies
    \begin{equation*}
        x(t) = 0, \quad \forall t \geq t_0 + \frac{\|x_0\|_\ds^{-\mu}}{-\mu\eta}, \quad \forall x_0 \in \mathbb{R}^n.
    \end{equation*}
    \item \text{exponentially stable} for \( \mu = 0 \), with the bound
    \begin{equation*}
        \|x(t)\| \leq \E^{-\eta(t - t_0)} \|x_0\|, \quad \forall t \geq t_0.
    \end{equation*}
    \item \text{nearly fixed-time stable} for \( \mu > 0 \), satisfying
    \begin{equation*}
        \|x_{x_0}(t)\|_\ds \leq r, \quad \forall t \geq t_0 + \frac{1}{ \mu\eta}, \quad \forall r > 0, \quad \forall x_0 \in \mathbb{R}^n.
    \end{equation*}
\end{itemize}
\end{theorem}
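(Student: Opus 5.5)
The plan is to repeat the Lyapunov computation from the proof of Theorem~\ref{theorem:free_disturbances}, now keeping the extra term $E_\sg\omega(t,x)$, and to bound that term with \eqref{eq:dis_con}. We use $V(x)=\|x\|_\ds$, with $P=X^{-1}$, and the derivative formula for the canonical homogeneous norm quoted there from \cite{Polyakov2025}. As before, set $\pi_\ds(x)=\ds(-\ln\|x\|_\ds)x$. Since $\omega$ may be discontinuous, solutions are understood in Filippov's sense. Because $\|\cdot\|_\ds$ is $C^1$ (indeed analytic) away from the origin, we will argue that for almost all $t$ with $x(t)\neq 0$ the derivative $\frac{d}{dt}\|x(t)\|_\ds$ equals $\frac{\partial\|x\|_\ds}{\partial x}\dot x(t)$. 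Here $\dot x(t)$ is a selection of the Filippov set-valued map. This is legitimate because local boundedness of $\omega$ makes the solution absolutely continuous.

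Along the closed loop \eqref{eq:switched_linear}, \eqref{eq:u}, the derivative splits linearly into two parts:
\[
\frac{d}{dt}\|x\|_\ds=\|x\|_\ds\frac{\pi_\ds(x)^\top P\ds(-\ln\|x\|_\ds)\bigl(A_{0,\sg}x+\|x\|_\ds^{1+\mu}B_\sg K_\sg\pi_\ds(x)\bigr)}{\pi_\ds(x)^\top P\Gh_\ds\pi_\ds(x)}+\|x\|_\ds\frac{\pi_\ds(x)^\top P\ds(-\ln\|x\|_\ds)E_\sg\omega(t,x)}{\pi_\ds(x)^\top P\Gh_\ds\pi_\ds(x)}.
\]
The first part is exactly the disturbance-free expression. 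Theorem~\ref{theorem:free_disturbances} already bounds it by $-\rho\|x\|_\ds^{1+\mu}$, using the homogeneity identities \eqref{eq:homogeneous_identities} and the LMI \eqref{eq:homogeneous_identitiesf}. The second part is $\|x\|_\ds$ times the left-hand side of \eqref{eq:dis_con}, since $\pi_\ds(x)=\ds(-\ln\|x\|_\ds)x$. It is therefore at most $\kappa\|x\|_\ds^{1+\mu}$. Adding the two bounds gives $\frac{d}{dt}\|x\|_\ds\le-(\rho-\kappa)\|x\|_\ds^{1+\mu}$, with $\eta=\rho-\kappa>0$ because $\kappa<\rho$.

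The main obstacle is the Filippov regularization. Condition \eqref{eq:dis_con} holds pointwise for almost all $t$, but the Filippov set at a point is a closed convex hull of limits of nearby values of $\omega$. To handle this, we note that the left-hand side of \eqref{eq:dis_con} is linear in $\omega$. Its coefficient depends continuously on $x\neq0$, and the right-hand side $\kappa\|x\|_\ds^\mu$ is continuous. So the inequality survives taking limits of nearby values and convex combinations, which means it holds for every element of the regularized set. It also holds at every active mode $\sg$, because the inequality is assumed for the current mode and both sides are frozen on intervals between switches.

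The convergence claims then follow by integrating the scalar comparison inequality $\dot v\le-\eta v^{1+\mu}$. For $\mu\neq0$ this gives $\|x(t)\|_\ds^{-\mu}\le\|x_0\|_\ds^{-\mu}+\mu\eta(t-t_0)$ whenever the right-hand side is nonnegative.
\begin{itemize}
\item For $\mu<0$, the right-hand side reaches zero at $t_0+\|x_0\|_\ds^{-\mu}/(-\mu\eta)$, and $x\equiv0$ afterwards since $V$ is nonincreasing.
\item For $\mu>0$, we get $\|x(t)\|_\ds\le(\mu\eta(t-t_0))^{-1/\mu}$ uniformly in $x_0$, which yields nearly fixed-time stability.
\item For $\mu=0$, Gr\"onwall gives $\|x(t)\|_\ds\le \E^{-\eta(t-t_0)}\|x_0\|_\ds$. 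Here $\Gh_\ds=I_n$, so $\|x\|_\ds=\|x\|_P$, and this is the stated exponential bound in the weighted norm.
\end{itemize}
Uniformity with respect to switching comes from the fact that $\|\cdot\|_\ds$ is a common Lyapunov function for all modes.
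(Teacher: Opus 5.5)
Your proposal is correct and follows the paper's own argument. You split $\frac{d}{dt}\|x\|_\ds$ into two pieces: the disturbance-free part, bounded by $-\rho\|x\|_\ds^{1+\mu}$ exactly as in Theorem~\ref{theorem:free_disturbances}, and $\|x\|_\ds$ times the left-hand side of \eqref{eq:dis_con}, bounded by $\kappa\|x\|_\ds^{1+\mu}$. Your Filippov-regularization and integration steps just supply details the paper leaves implicit.

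One slip to fix: for $\mu>0$ the integrated inequality reverses. It reads $\|x(t)\|_\ds^{-\mu}\ge\|x_0\|_\ds^{-\mu}+\mu\eta(t-t_0)$, because you multiply $\dot v\le-\eta v^{1+\mu}$ by $-\mu v^{-\mu-1}<0$. That is the direction your bound $\|x(t)\|_\ds\le(\mu\eta(t-t_0))^{-1/\mu}$ actually needs. Note also that this bound reaches level $r$ at time $t_0+1/(\mu\eta r^{\mu})$, so the statement's $t_0+1/(\mu\eta)$ is valid only for $r\ge 1$.
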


\begin{IEEEproof}
 To analyze the time evolution of \( \|x\|_\ds \), we compute its time derivative along the trajectories of the closed-loop system (\ref{eq:switched_linear}), (\ref{eq:u}). This leads to the following expression:
\begin{gather*}
    \frac{d}{dt} \|x\|_\ds \leq -\rho \|x\|_\ds^{1+\mu}\\ + \|x\|_\ds \frac{x^\top \ds^\top (- \ln \|x\|_\ds) P \ds(- \ln \|x\|_\ds) E_{\sigma}\omega(t,x)}{ x^\top \ds^\top (- \ln \|x\|_\ds) P \Gh_\ds \ds (- \ln \|x\|_\ds) x }.
\end{gather*}
Using (\ref{eq:dis_con}) we derive the inequality:
\[
\frac{d}{dt} \|x\|_\ds \leq -\eta \|x\|_\ds^{1+\mu}, \; \text{thus, the proof is concluded}.\]
\end{IEEEproof}

Given the disturbance constraint in inequality (\ref{eq:dis_con}), the control law (\ref{eq:u}) can effectively reject  disturbances. It is also noteworthy that this result holds without requiring the disturbance to be matched. Nevertheless, the current formulation of the disturbance restriction (\ref{eq:dis_con}) is implicit.
\begin{remark}
A more simple characterization of \( E_\sg \omega \) can be obtained in the case of matched perturbations:
\(
E_\sg = B_\sg.
\)
\end{remark}
Theorem~\ref{theorem:free_disturbances} is based on the existence of a
common homogeneous Lyapunov function, which ensures stability under
arbitrary switching.
This requirement may be conservative when the subsystem dynamics differ
significantly.
To reduce conservatism, we next consider mode-dependent homogeneous
Lyapunov functions.
In this case, stability is preserved by imposing a dwell-time
condition on the switching signal.
\begin{theorem}
\label{theorem:free_disturbances_MLF}
Consider the switched linear system \eqref{eq:switched_linear} with
$\omega(t,x)=0$.
Assume that:
\begin{enumerate}
\item The linear algebraic equations
\eqref{eq:homogeneous_eq0}–\eqref{eq:homogeneous_eq}
admit a (common) solution $(\Gh_0,Y_{0,\sigma})$, and define
\[
\Gh_\ds=I_n+\mu \Gh_0,
\]
which is anti-Hurwitz for $\mu$ sufficiently close to zero. Moreover,
\[
A_{0,\sigma}=A_\sigma+B_\sigma K_{0,\sigma}, \quad
K_{0,\sigma}=Y_{0,\sigma}(\Gh_0-I_n)^{-1},
\]
satisfies
\[
A_{0,\sigma}\Gh_\ds=(\Gh_\ds+\mu I_n)A_{0,\sigma}, \quad
\Gh_\ds B_\sigma=B_\sigma.
\]
\item For each $\sigma\in\Sigma$, there exist matrices
$X_\sigma\succ0$, $Y_\sigma$, and constants $\rho_\sigma>0$ such that
\begin{align}
\label{eq:MLF_LMI}
\begin{aligned}
X_\sigma A_{0,\sigma}^\top + A_{0,\sigma}X_\sigma
+ B_\sigma Y_\sigma + Y_\sigma^\top B_\sigma^\top \\
\qquad + \rho_\sigma(\Gh_\ds X_\sigma+X_\sigma\Gh_\ds^\top) &\preceq 0,\\
\Gh_\ds X_\sigma + X_\sigma\Gh_\ds^\top &\succ 0.
\end{aligned}
\end{align}
\end{enumerate}
Define the feedback law
\begin{align}\label{eq:u_M}
u(x)=K_{0,\sigma}x
+\|x\|_{\ds,\sigma}^{1+\mu}
K_\sigma \ds(-\ln\|x\|_{\ds,\sigma})x,
\
K_\sigma=Y_\sigma X_\sigma^{-1}.
\end{align}
Then:
\begin{itemize}
\item For each mode $\sigma$, define the respective Lyapunov function $V_\sigma (x)$ as the canonical homogeneous norm $\|\cdot\|_{\ds,\sigma}$ induced by by the norm
\(\sqrt{x^\top P_\sigma x}\).
\item Along the trajectories of mode $\sigma$,
\[
\dot V_\sigma(x)\le -\rho_\sigma V_\sigma^{1+\mu}(x),
\quad \forall x\neq0.
\]
\item The closed-loop system is $\ds$-homogeneous of degree $\mu$.
\item If the switching signal satisfies a minimum dwell-time condition
\[
\inf_i (t_{i+1}-t_i) \ge \tau  > 0,
\]
then the origin of the closed-loop system is locally 
finite-time stable for $\mu\in[-1,0)$ and globally  fixed-time
stable with respect to a compact set for $\mu>0$.

\item If the switching signal satisfies an average dwell-time condition
\begin{align}\label{eq:ADT}
N_\sigma(t,t_0)\le N_0+\frac{t-t_0}{\tau_d},
\qquad
\tau_d>\frac{\ln\gamma}{\rho_{\min}},
\end{align}
then the origin of the closed-loop system is globally uniformly exponentially
stable for $\mu=0$.
\end{itemize}
\end{theorem}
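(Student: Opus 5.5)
The proof has three parts. First we rerun the per-mode computation of Theorem~\ref{theorem:free_disturbances} with mode-dependent matrices $P_\sigma = X_\sigma^{-1}$. Then we obtain homogeneity from Theorem~\ref{thm:switched-homogeneous}. Finally, since the functions $V_\sigma = \|\cdot\|_{\ds,\sigma}$ jump at switching instants, we control those jumps with norm-equivalence constants and absorb them using the dwell-time conditions.

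\emph{Step 1 (per-mode decay).} Item 1 is purely algebraic and coincides with part 1) of the proof of Theorem~\ref{theorem:free_disturbances}. That argument gives invertibility of $\Gh_0-I_n$, anti-Hurwitzness of $\Gh_\ds$ for small $\mu$, and the identities $A_{0,\sigma}\ds(s)=\E^{\mu s}\ds(s)A_{0,\sigma}$ and $\ds(s)B_\sigma=\E^{s}B_\sigma$. The inequality $\Gh_\ds X_\sigma+X_\sigma\Gh_\ds^\top\succ0$ makes $\ds$ monotone with respect to $\|\cdot\|_{P_\sigma}$, so $V_\sigma$ is well defined, positive definite, proper and smooth off the origin. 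Inserting the feedback \eqref{eq:u_M} into the derivative formula for the canonical homogeneous norm reproduces verbatim the computation in the proof of Theorem~\ref{theorem:free_disturbances}, with $P$ replaced by $P_\sigma$. Congruence of \eqref{eq:MLF_LMI} by $P_\sigma$ then yields $\dot V_\sigma\le-\rho_\sigma V_\sigma^{1+\mu}$. For homogeneity, each closed-loop vector field $x\mapsto A_{0,\sigma}x+\|x\|_{\ds,\sigma}^{1+\mu}B_\sigma K_\sigma\ds(-\ln\|x\|_{\ds,\sigma})x$ is $\ds$-homogeneous of degree $\mu$. This follows from $\|\ds(s)x\|_{\ds,\sigma}=\E^s\|x\|_{\ds,\sigma}$ together with the two identities above. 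Theorem~\ref{thm:switched-homogeneous} then gives homogeneity of the switched system.

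\emph{Step 2 (comparison constants).} All $V_i$ are homogeneous norms for the same dilation, so the ratio $V_i(x)/V_j(x)$ is $\ds$-homogeneous of degree $0$. It is therefore bounded by its maximum over the compact unit sphere $S$. This gives a constant $\gamma\ge1$ with $V_i(x)\le\gamma V_j(x)$ for all $i,j$ and all $x$; this is the $\gamma$ appearing in \eqref{eq:ADT}. We also set $\rho_{\min}=\min_\sigma\rho_\sigma$.

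\emph{Step 3 (dwell time).} For $\mu=0$ we have $V_{\sigma(t)}(x(t))\le \E^{-\rho_{\min}(t-t_k)}V_{\sigma(t_k)}(x(t_k))$ between switches, and the value can multiply by at most $\gamma$ at each switch. Chaining these bounds gives $V(t)\le\gamma^{N_0}\E^{-(\rho_{\min}-\ln\gamma/\tau_d)(t-t_0)}V(t_0)$. The standard average dwell-time argument then yields global uniform exponential stability, since the condition on $\tau_d$ makes the exponent negative.

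For $\mu\ne0$ we integrate the scalar inequality on each interval: $V^{-\mu}$ grows (for $\mu<0$, it decreases to zero) at rate at least $\mu\rho_{\min}$, and a switch scales $V^{-\mu}$ by at most $\gamma^{|\mu|}$. For $\mu<0$, start in a small sublevel set. On an interval of length at least $\tau$, the quantity $V^{-\mu}$ drops by at least $-\mu\rho_{\min}\tau$, which exceeds the multiplicative increase $(\gamma^{-\mu}-1)V^{-\mu}$ once $V$ is small enough. Hence trajectories stay in a neighbourhood of the origin, and zero is reached in finite time; this is the source of the \emph{local} qualifier. For $\mu>0$ the roles reverse. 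Far from the origin, the decrease $V^{-\mu}\mapsto V^{-\mu}+\mu\rho_{\min}\tau$ over a dwell interval dominates the scaling loss, because $V^{-\mu}$ is tiny there. Combined with the uniform bound $V^{-\mu}(t)\ge\mu\rho_\sigma(t-t_k)$ on each interval, this shows that a compact set $\{V_i\le c\}$ is reached within a time independent of $x_0$ and is invariant thereafter.

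\emph{Main obstacle.} The hard step is the $\mu\neq0$ dwell-time bookkeeping in Step 3. The jump factor $\gamma$ acts multiplicatively, while the flow acts additively on $V^{-\mu}$, so the two effects can be balanced only in one region: near the origin for $\mu<0$, and near infinity for $\mu>0$. We would make this precise by tracking the sequence $w_k=V_{\sigma(t_k)}(x(t_k))^{-\mu}$ and deriving the recursion $w_{k+1}\ge\gamma^{\mu}(w_k+\mu\rho_{\min}\tau)$ for $\mu>0$ (the analogous inequality for $\mu<0$), and then determining explicitly the threshold of $w$ beyond which this recursion is contracting.
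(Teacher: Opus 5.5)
Your proposal is correct and follows essentially the paper's route: per-mode decay by congruence of the LMIs, a uniform comparison constant $\gamma$ between the homogeneous norms $V_\sigma$, and tracking of $V^{-\mu}$ across switching instants. The paper unrolls the chain that your recursion encodes and obtains the threshold $V_{\sigma_0}(0)^{-\mu}\le -\mu\gamma\rho_{\min}\tau/(\gamma-1)$ for $\mu<0$, which is exactly your fixed point once the paper's $\gamma$ is read as your $\gamma^{|\mu|}$; the $\mu=0$ case is handled by the same standard average dwell-time argument. One slip to fix: with $V_i\le\gamma V_j$ the jump estimate for $\mu>0$ is $w_{k+1}\ge\gamma^{-\mu}(w_k+\mu\rho_{\min}\tau)$, not $\gamma^{\mu}$; also note that bounding $V$ itself rather than $V^{|\mu|}$, as the paper does, is what keeps $\gamma$ meaningful when $\mu=0$.
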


\begin{IEEEproof}
Consider the mode-dependent Lyapunov functions
\(V_\sigma\). 
By item~2) of Theorem~\ref{theorem:free_disturbances_MLF}, each matrix $X_\sigma \succ 0$, hence $V_\sigma$ is positive definite and radially unbounded.

Along trajectories evolving in mode $\sigma$, using~\eqref{eq:u_M} the closed-loop system satisfies
\begin{align}\label{eq:close_loop_dewelltime}
\dot{x} = A_{0,\sigma} x + \|x\|_{\ds,\sigma}^{1+\mu} B_\sigma K_\sigma \pi_\ds(x),  
\end{align}
where $\pi_\ds(x)=\ds(-\ln \|x\|_{\ds,\sigma})x$. Using the $\ds$-homogeneity property and the LMIs of item~2), it follows that
\begin{align}\label{eq:ILF_Dynamics}
\dot V_\sigma(x) \le -\rho_\sigma V_\sigma^{1+\mu}(x), \quad \forall x\neq 0,\end{align}
which implies that each $V_\sigma$ strictly decreases along the flow of its corresponding mode.

Since any canonical norm $\|\cdot\|_\ds$ is \ds-homogeneous of degree 1 by construction, there is $\gamma>1$ such that for any $\sigma$, $\sigma'\in\Sigma$
\begin{align}\label{eq:Vssp}
V_{\sigma'}^{|\mu|}(x) \le \gamma\, V_\sigma^{|\mu|}(x), \quad \forall x\in\mathbb{R}^n, 
\end{align}
which bounds the Lyapunov function jump across modes. 

\textbf{Finite/ nearly fixed-time convergence:} The main idea for achieving convergence under the minimum dwell-time condition
\[
\inf_k (t_{k}-t_{k-1}) \ge \tau > 0
\]
is that the mode-dependent Lyapunov function $V_{\sigma_k}(t)$ strictly decreases over each dwell interval $[t_{k-1}, t_k)$.

For $\mu\in[-1,0)$, $\rho_\sigma>0$ using~\eqref{eq:ILF_Dynamics}, we have 
\[V_\sigma(t)^{-\mu} \leq V_\sigma\left(t_0\right)^{-\mu}+\mu   \rho_\sigma  \left(t-t_0\right)\]
Let the switching times be  $ 0=t_0<t_1\dots   t_{k-1}<t_{k}$ and define $\sigma(t) =\sigma(t_{k-1}) =\sigma_{k-1}$ if $t_{k-1} \leq t<t_{k}.$ 

The behavior along the trajectories using \eqref{eq:Vssp} is:
\begin{gather*}
t_0 \leq t<t_1 \quad V_{\sigma_0}\left(t_1\right)^{-\mu} \leq V_{\sigma_0}(0)^{-\mu}+\mu  \rho_{\sigma_0} \left(t_1-t_0\right)\\[1mm]
t_1 \leq t<t_2,\ V_{\sigma_1}\left(t_1\right)^{-\mu} \leq \gamma  V_{\sigma_0}\left(t_1\right)^{-\mu} \leq \gamma \bigg[V_{\sigma_0}(0)^{-\mu}+ \\[1mm]\mu \rho_\sigma \left(t_1-t_0\right)\bigg], \ V_{\sigma_1}\left(t_2\right)^{-\mu} \leq V_{\sigma_1}\left(t_1\right)^{-\mu}+\mu  \rho_{\sigma_1} \left(t_2-t_1\right)\\[1mm]  \leq \gamma  V_{\sigma_0}(0)^{-\mu}+\gamma  \mu  \rho_{\sigma_0} \left(t_1-t_0\right)+\mu  \rho_{\sigma_1} \left(t_2-t_1\right)\\[1mm]
t_2 \leq t<t_3, \ V_{\sigma_2}\left(t_2\right)^{-\mu} \leq \gamma V_{\sigma_1}\left(t_2\right)^{-\mu}, \\[1mm]  V_{\sigma_2}\left(t_3\right)^{-\mu} \leq 
V_{\sigma_2}\left(t_2\right)^{-\mu}+\mu  \rho_{\sigma_2} \left(t_3-t_2\right) \leq \gamma^2  V_{\sigma_0}(0)^{-\mu}\\+\gamma^2  \mu  \rho_{\sigma_0} \left(t_1-t_0\right)+\gamma  \mu  \rho_{\sigma_1} \left(t_2-t_1\right)+\mu  \rho_{\sigma_2} \left(t_3-t_2\right)
\\[1mm]
\cdots\\[1mm]
t_{k-1} \leq t<t_k \quad V_{\sigma_{k-1}}\left(t_k\right)^{-\mu} \leq \gamma^{k-1}  \Bigg(V_{\sigma_0}(0)^{-\mu}\\[1mm] +\mu  \sum_{i=0}^{k-1}\left[\gamma^{-i}  \rho_{\sigma_i} \left(t_{i+1}-t_i\right)\right]\Bigg).
\end{gather*}
Using the minimum dwell-time condition
\begin{align}\label{eq:mdt}
\inf_k (t_k - t_{k-1}) \ge \tau > 0
\ \text{
and defining} \ 
\rho_{\min} := \min_{\sigma \in \Sigma} \rho_\sigma,
\end{align}
we can bound the sum in the Lyapunov inequality:
\begin{gather*}
\sum_{i=0}^{k-1} \gamma^{-i} \rho_{\sigma_i} (t_{i+1} - t_i) 
\geq \sum_{i=0}^{k-1} \gamma^{-i} \rho_{\min} (t_{i+1} - t_i) 
\ge \\ \rho_{\min} \sum_{i=0}^{k-1} \gamma^{-i} \tau.
\end{gather*}
The geometric sum evaluates to
\(
\sum_{i=0}^{k-1} \gamma^{-i} \leq \frac{\gamma}{\gamma-1}.
\)

Hence, the Lyapunov function at the switching instants satisfies
\[
V_{\sigma_{k-1}}(t_k)^{-\mu} \le \gamma^{k-1}\Bigg( V_{\sigma_0}(0)^{-\mu} + \mu \, \rho_{\min} \, \tau \, \frac{\gamma}{\gamma - 1}\Bigg).
\]

If \(\mu < 0\), the second term is negative, ensuring strict decrease. Finite-time convergence is guaranteed if
\begin{align}\label{eq:cdt_FTS}
V_{\sigma_0}(0)^{-\mu} + \frac{\mu \, \rho_{\min}\, \gamma \, \tau}{\gamma - 1} \le 0
\ \Rightarrow \
\tau \ge -\frac{\gamma - 1}{\mu \, \gamma \rho_{\min}} V_{\sigma_0}(0)^{-\mu}.
\end{align}

Under this condition, the Lyapunov function reaches zero in finite time and the settling time along each mode \(\sigma\) is
\[
T_\sigma(x_0) \leq t_0+\frac{\|x_0\|_{\ds,\sigma}^{-\mu}}{-\mu \, \rho_\sigma}.
\]
Thus,
\[
\exists T < \infty : \quad V_{\sigma(t)}(t) = 0 \quad \text{and} \quad x(t) = 0, \ \forall t \ge T.
\]
Note that each mode is individually finite-time stable, there is a finite number of switching under the minimum dwell-time condition, therefore $|x|$ is always bounded. 
The above analysis establishes finite-time convergence only on compact
sublevel sets of \(V_\sigma\). 


For $\mu>0$, we have
\(V_\sigma(t)^{-\mu} \geq V_\sigma(t_0)^{-\mu} + \mu \, \rho_\sigma \, (t - t_0).\)
Using (\ref{eq:Vssp}), the behavior along the trajectories is:
\begin{gather*}
t_0 \leq t < t_1: \
V_{\sigma_0}(t_1)^{-\mu} \geq V_{\sigma_0}(0)^{-\mu} + \mu \, \rho_{\sigma_0} \, (t_1 - t_0),\\[1mm]
t_1 \leq t < t_2:\ 
V_{\sigma_1}(t_1)^{-\mu} \geq \frac{1}{\gamma} \, V_{\sigma_0}(t_1)^{-\mu} 
\geq \frac{1}{\gamma} \Big[V_{\sigma_0}(0)^{-\mu} \\[1mm] + \mu \, \rho_{\sigma_0} \, (t_1 - t_0)\Big], \ 
V_{\sigma_1}(t_2)^{-\mu} \geq V_{\sigma_1}(t_1)^{-\mu} + \mu \, \rho_{\sigma_1} \, (t_2 - t_1)\\
\geq \frac{1}{\gamma} \, V_{\sigma_0}(0)^{-\mu} + \frac{1}{\gamma} \, \mu \, \rho_{\sigma_0} \, (t_1 - t_0) + \mu \, \rho_{\sigma_1} \, (t_2 - t_1),\\[1mm]
t_2 \leq t < t_3: \
V_{\sigma_2}(t_2)^{-\mu} \geq \frac{1}{\gamma} \, V_{\sigma_1}(t_2)^{-\mu},\\[1mm]
V_{\sigma_2}(t_3)^{-\mu} \geq V_{\sigma_2}(t_2)^{-\mu} + \mu \, \rho_{\sigma_2} \, (t_3 - t_2)
\geq \gamma^{-2} \, V_{\sigma_0}(0)^{-\mu}\\[1mm] + \gamma^{-2} \, \mu \, \rho_{\sigma_0} \, (t_1 - t_0)
+ \gamma^{-1} \, \mu \, \rho_{\sigma_1} \, (t_2 - t_1) + \mu \, \rho_{\sigma_2} \, (t_3 - t_2),\\[1mm]
\dots \\[1mm]
t_{k-1} \leq t < t_k: \
V_{\sigma_{k-1}}(t_k)^{-\mu} \geq \gamma^{1-k} \, \Bigg(V_{\sigma_0}(0)^{-\mu}\\[1mm]
+ \mu \sum_{i=0}^{k-1} \gamma^{-i} \, \rho_{\sigma_i} \, (t_{i+1} - t_i)\Bigg).
\end{gather*}

 Using the minimum dwell-time condition~\eqref{eq:mdt}, substituting this estimate into the inequality obtained along the switching
trajectory yields
\[
V_{\sigma_{k-1}}(t_k)^{-\mu}
\ge
\gamma^{1-k}
\left(
V_{\sigma_0}(0)^{-\mu}
+
\mu\rho_{\min}\tau \sum_{i=0}^{k-1} \gamma^{-i}
\right).
\]

Noting that
\[
\sum_{i=0}^{k-1} \gamma^{-i}
=
\frac{\gamma - \gamma^{1-k}}{\gamma-1}, \quad \gamma>1,
\]
we obtain
\[
V_{\sigma_{k-1}}(t_k)^{-\mu}
\ge
\gamma^{1-k} \left(V_{\sigma_0}(0)^{-\mu}
+
\mu\rho_{\min}\tau \frac{\gamma - \gamma^{1-k}}{\gamma-1}\right).
\]

Since $\mu>0$, the mapping $s \mapsto s^{-1/\mu}$ is strictly decreasing.
Therefore,
\[
V_{\sigma_{k-1}}(t_k)
\le
\left(
\gamma^{1-k} V_{\sigma_0}(0)^{-\mu}
+
\mu\rho_{\min}\tau \gamma^{1-k}\frac{\gamma - \gamma^{1-k}}{\gamma-1}
\right)^{-1/\mu}.
\]
As we can conclude from this inequality, for any initial conditions and $k=1$ we obtain:
\[
V_{\sigma_0}(t_1) \leq \left(\mu\rho_\sigma\min\tau\frac{\gamma}{\gamma-1}\right)^{-1/\mu}
\]
and the system is globally fixed-time converging to this set.




For $\mu = 0$, the same steps can be applied, following the approach commonly used in~\cite{yang2014survey} and the condition~\eqref{eq:ADT}, ensures that the decay along flows dominates the accumulation of Lyapunov jumps, yielding global uniform  exponential stability of the switched closed-loop system. Finally, the control law is analytic on $\mathbb{R}^n\setminus\{0\}$ as a composition of analytic functions and is bounded away from the origin. This completes the proof.
\end{IEEEproof}
Recall that each implicit Lyapunov function is \ds-homogeneous of degree one, then there exist $0<c_1\leq c_2<+\infty$ such that
\begin{align}\label{eq:ILF_emode}
c_1 \|x\|_\ds \leq V_{\sigma}(x) \leq c_2 \|x\|_\ds,
\end{align}
for all $x\in\R^n$ and $\sigma\in\Sigma$.
 \begin{corollary}[FTS and nFxTS under state-dependent dwell-time]
\label{cor:sddt}
Consider the switched closed-loop system~\eqref{eq:close_loop_dewelltime} under the assumptions of Theorem~\ref{theorem:free_disturbances_MLF}.  

Suppose that the switching signal \(\sigma(t)\) satisfies a state-dependent dwell-time
\[
t_{k+1}-t_k \ge \tau_\sigma(x(t_k)), \quad \forall k \ge 0,
\]
where \(\tau_\sigma\) is \(\ds\)-homogeneous of degree \(-\mu\) and satisfies
\[
\tau_{\sigma}(x(t_k)) >
\begin{cases}
\left[\|x(t_k) \|_\ds\right]^{-\mu} \dfrac{c_2^{-\mu}-\gamma^{-1}c_1^{-\mu}}{-\mu}, & \text{if } \mu<0, \\[1ex]
\left[\|x(t_k) \|_\ds\right]^{-\mu} \dfrac{\gamma c_1^{-\mu}-c_2^{-\mu}}{\mu\,\rho_{\min}}, & \text{if } \mu>0 .
\end{cases}
\]
Then:
\begin{enumerate}
    \item If \(\mu<0\), the origin is globally finite-time stable (FTS).  
    \item If \(\mu>0\), the origin is globally nearly fixed-time stable (nFxTS).  
\end{enumerate}
\end{corollary}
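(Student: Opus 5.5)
We propose to compare the mode-dependent functions through the common homogeneous norm, using the sandwich bound \eqref{eq:ILF_emode}, and to show that the state-dependent dwell time forces a strict geometric contraction of $\|x(t_k)\|_\ds$ at the switching instants. The argument runs on $W_k := \|x(t_k)\|_\ds^{-\mu}$. We first treat a single interval $[t_k,t_{k+1})$ in mode $\sigma_k$. Integrating \eqref{eq:ILF_Dynamics} as in the proof of Theorem~\ref{theorem:free_disturbances_MLF} gives
\[
V_{\sigma_k}(x(t))^{-\mu} \ \ \lesseqgtr\ \ V_{\sigma_k}(x(t_k))^{-\mu}+\mu\rho_{\sigma_k}(t-t_k),
\]
with ``$\le$'' for $\mu<0$ and ``$\ge$'' for $\mu>0$. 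We then convert both ends to $\|\cdot\|_\ds$ via $c_1\|x\|_\ds\le V_\sigma(x)\le c_2\|x\|_\ds$. Raising to the power $-\mu$ reverses or preserves the inequalities according to the sign of $\mu$, so the constants $c_1^{-\mu}$ and $c_2^{-\mu}$ appear exactly as in the stated threshold. At the switching instant we use the jump bound \eqref{eq:Vssp} to pay the factor $\gamma$.

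For $\mu<0$ the plan has two steps. First, we show that if no switch occurs, the active mode drives $V_{\sigma_k}$ to zero by time $t_k+V_{\sigma_k}(x(t_k))^{-\mu}/(-\mu\rho_{\sigma_k})$, and this is at most $t_k+c_2^{-\mu}W_k/(-\mu\rho_{\min})$. Second, we compare this with the dwell-time lower bound $\tau_\sigma(x(t_k))>W_k(c_2^{-\mu}-\gamma^{-1}c_1^{-\mu})/(-\mu)$. The aim is to show that either the state reaches the origin within the dwell interval, or at $t_{k+1}$ we have $W_{k+1}\le \theta W_k$ for some $\theta<1$ independent of $k$ and $x$. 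Homogeneity of $\tau_\sigma$ of degree $-\mu$ makes every estimate scale-invariant, which is the source of uniformity. Since $-\mu>0$, a geometric decay of $W_k$ means $\|x(t_k)\|_\ds\to0$ geometrically. Summing the interval lengths, each of order $W_k$ because the dwell time is comparable to $W_k$ (and we may also use the upper time estimate on each interval), gives a convergent geometric series. This yields a finite settling time $T(x_0)\lesssim \|x_0\|_\ds^{-\mu}/(1-\theta)$. Global stability follows from boundedness of $V_\sigma$ across jumps, with $\gamma$ applied at most once per interval of guaranteed decrease. Theorem~\ref{thm:Orlov} is not directly applicable to a switched system, so finite-time stability is concluded from this explicit bound.

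For $\mu>0$ the argument is symmetric. Here $W$ grows along flows: $V^{-\mu}$ increases at rate at least $\mu\rho_{\min}$. Over the dwell interval $W_{k+1}$ gains at least $\mu\rho_{\min}\tau_\sigma$ after conversion by $c_1,c_2$. It loses at most a factor $\gamma$ through the jump. The second dwell threshold $(\gamma c_1^{-\mu}-c_2^{-\mu})/(\mu\rho_{\min})\,W_k$ is exactly what guarantees $W_{k+1}>W_k$, and with strict inequality homogeneity gives $W_{k+1}\ge\theta^{-1}W_k$. Hence $\|x(t_k)\|_\ds\to0$. For the nearly fixed-time part we use the first interval: from any $x_0$, after time $\tau$ the bound $V^{-\mu}\ge \mu\rho t$ holds independently of $x_0$, as in the end of the proof of Theorem~\ref{theorem:free_disturbances_MLF}. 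Combined with the monotone growth of $W_k$, this shows that every ball $\{\|x\|_\ds\le r\}$ is reached within a uniform time and never left.

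The main obstacle is the bookkeeping that turns the one-interval inequalities into a contraction factor $\theta<1$ with the specific constants in the statement. We must also handle the case in which the state may reach zero, or the flow bound becomes vacuous, before the dwell time elapses. We would handle this first by splitting on whether $V_{\sigma_k}$ hits zero within the interval. If the stated constants do not close the estimate exactly, we would accept a slightly adjusted threshold; strict inequality in $\tau_\sigma$, together with homogeneity and compactness of the unit sphere, provides the uniform margin $\theta$.
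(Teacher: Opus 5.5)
\textbf{Overall.} Your one-interval estimate is exactly the paper's: the flow bound on $V_{\sigma_k}^{-\mu}$, the jump factor $\gamma$ from \eqref{eq:Vssp}, and the conversion to $W_k=\|x(t_k)\|_\ds^{-\mu}$ through $c_1,c_2$. Write $\Delta_k=t_{k+1}-t_k$.

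\textbf{The case $\mu<0$.} Your plan here is sound. It is more complete than the paper's proof, which stops at the one-step inequality $W_{k+1}<W_k$ and appeals to homogeneity. Your hedge about the constants is warranted. The paper's own chain $\gamma\bigl(c_2^{-\mu}W_k+\mu\rho_{\min}\Delta_k\bigr)<c_1^{-\mu}W_k$ gives $\Delta_k>\frac{c_2^{-\mu}-\gamma^{-1}c_1^{-\mu}}{-\mu\rho_{\min}}W_k$. So the stated $\mu<0$ threshold has simply lost the factor $\rho_{\min}$ (it is not even dimensionally a time), and this adjusted threshold is what either argument proves. Your uniform margin $\theta$ also needs $\tau_\sigma$ to be (lower semi)continuous on the unit sphere. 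A strict pointwise inequality alone does not prevent the per-step factors from tending to $1$.

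\textbf{The gap: the nearly fixed-time part for $\mu>0$.} You import ``after time $\tau$ the bound $V^{-\mu}\ge\mu\rho t$ holds independently of $x_0$'' from the end of the proof of Theorem~\ref{theorem:free_disturbances_MLF}. There, however, $\tau$ is a constant minimum dwell time. Here $\tau_\sigma$ is homogeneous of degree $-\mu<0$, so $\tau_\sigma(x_0)\to0$ as $\|x_0\|_\ds\to\infty$. From large initial states the first interval, and many subsequent ones, may therefore be arbitrarily short. The first-interval bound then only gives $W_1\gtrsim\mu\rho_{\min}\Delta_0$, and monotonicity of $W_k$ cannot turn this into a reaching time independent of $x_0$.

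What is missing is an upper bound on elapsed time in terms of $W$, and your own estimates supply it.
\begin{enumerate}
\item From $c_1^{-\mu}W_{k+1}\ge\gamma^{-1}\bigl(c_2^{-\mu}W_k+\mu\rho_{\min}\Delta_k\bigr)$ you get $\Delta_k\le W_{k+1}/b$ with $b=\mu\rho_{\min}c_1^{\mu}/\gamma$.
\item Combined with the geometric growth $W_j\le\theta^{k-j}W_k$, this gives $t_k-t_0\le W_k/\bigl(b(1-\theta)\bigr)$.
\item Inside each interval, $\|x(t)\|_\ds^{-\mu}\ge c_1^{\mu}\bigl(c_2^{-\mu}W_k+\mu\rho_{\min}(t-t_k)\bigr)$.
\end{enumerate}
Together these yield $\|x(t)\|_\ds\le\bigl(c\,(t-t_0)\bigr)^{-1/\mu}$, with $c>0$ independent of $x_0$ and of the admissible switching signal. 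That is the nFxTS claim.

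An alternative, closer to what the paper gestures at, uses homogeneity. Since $\tau_\sigma$ has degree $-\mu$, the class of admissible switching signals is invariant under the scaling of Theorem~\ref{thm:switched-homogeneous}. A uniform time $T$ to halve $\|x\|_\ds$ from the unit sphere therefore becomes $\E^{-\mu s}T$ at level $\E^{s}$. Summing over dyadic levels above $r$ gives a uniform bound proportional to $r^{-\mu}$.
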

\begin{IEEEproof}
For each mode \(\sigma\), the Lyapunov function \(V_\sigma(x)\) is positive definite, radially unbounded, and satisfies along flows~(\ref{eq:ILF_Dynamics}), as established in Theorem~\ref{theorem:free_disturbances_MLF}. Across switching instants from mode \(\sigma\) to \(\sigma'\), there exists \(\gamma>1\) such that~(\ref{eq:Vssp}) is verified.

\textbf{FTS case (\(\mu<0\)):}  As  was shown in the proof of Theorem~\ref{theorem:free_disturbances_MLF}:
\begin{gather*}
V_{\sigma_{k+1}} (t_{k+1})^{-\mu}\leq \gamma \left[V_{\sigma_{k}} (t_{k})^{-\mu} + \mu \rho_k(t_{k+1} -t_k)\right]\\
\leq \gamma\left[V_{\sigma_{k}} (t_{k})^{-\mu} + \mu \rho_{\min}(t_{k+1} -t_k)\right].
\end{gather*}
Using~\eqref{eq:ILF_emode}, we have
\begin{gather*}
\left[c_1\left(\|x(t_{k+1}) \|_\ds\right)\right]^{-\mu}\leq \gamma \left[\left[c_2\left(\|x(t_k) \|_\ds\right)\right]^{-\mu} + \mu \rho_{\min}(t_{k+1} -t_k)\right]\\
< \left[c_1\left(\|x(t_k) \|_\ds\right)\right]^{-\mu},
\end{gather*} where the last inequality is the condition whose validity guarantees the decay of Lyapunov function on each interval $[t_k,t_k+1]$, which leads to the desired condition:
\[t_{k+1} -t_k >  \dfrac{c_2^{-\mu}-\gamma^{-1}c_1^{-\mu}}{-\mu}\left[\|x(t_k) \|_\ds\right]^{-\mu}.\]
Hence, the origin is globally FTS (by \ds-homogeneity of the system, see Theorem~\ref{thm:switched-homogeneous}).

\textbf{nFxTS case (\(\mu>0\)):}  
As  was shown in the proof of Theorem~\ref{theorem:free_disturbances_MLF}:
\begin{gather*}
V_{\sigma_{k+1}} (t_{k+1})^{-\mu}\geq \frac{1}{\gamma}\left[V_{\sigma_{k}} (t_{k})^{-\mu} + \mu \rho_k(t_{k+1} -t_k)\right],\end{gather*}
due to $\mu >0$ and using~\eqref{eq:ILF_emode}, we have
\begin{gather*}
V_{\sigma_{k+1}} (t_{k+1})^{\mu}\leq \frac{\gamma}{V_{\sigma_{k}} (t_{k})^{-\mu} + \mu \rho_{\min}(t_{k+1} -t_k)}+\left[c_1\left(\|x(t_{k+1}) \|_\ds\right)\right]^{\mu}\\
\leq  \frac{\gamma}{\left[c_2\left(\|x(t_k) \|_\ds\right)\right]^{-\mu} + \mu \rho_{\min}(t_{k+1} -t_k)} \\
 < \left[c_1\left(\|x(t_k) \|_\ds\right)\right]^{\mu}
,\end{gather*}
 where again the last inequality has to be verified to guarantee the convergence, which concludes the desired condition:
\[t_{k+1} -t_k >  \dfrac{\gamma c_1^{-\mu}-c_2^{-\mu}}{\mu\,\rho_{\min}}\left[\|x(t_k) \|_\ds\right]^{-\mu}.\]
This establishes global nFxTS,  completing the proof (by \ds-homogeneity of the system established in Theorem~\ref{thm:switched-homogeneous}).
\end{IEEEproof}

It is worth to highlight that the result of this corollary corresponds to the one of Theorem~\ref{thm:switched-homogeneous}: the switched system is \ds-homogeneous, then to have a global convergence the dwell-time function has to be \ds-homogeneous of degree $-\mu$.

The following results address robustness with respect to disturbances.
\begin{theorem}
\label{thm:disturbances2}
Let all conditions of Theorem~\ref{theorem:free_disturbances_MLF} hold, and let
\(\omega : \mathbb{R}_{\ge 0} \times \mathbb{R}^n \to \mathbb{R}^n\)
be a locally bounded measurable disturbance.

Assume that, for each mode \(\sigma \in \Sigma\), there exists a constant
\(\kappa_\sigma \in (0,\rho_\sigma)\) such that
\begin{equation}\label{eq:dis_con_MLF}
\frac{
x^\top \ds^\top(-\ln\|x\|_{\ds,\sigma})
P_\sigma
\ds(-\ln\|x\|_{\ds,\sigma})
E_\sigma \omega(t,x)
}{
x^\top \ds^\top(-\ln\|x\|_{\ds,\sigma})
P_\sigma \Gh_\ds
\ds(-\ln\|x\|_{\ds,\sigma}) x
}
\le \kappa_\sigma \|x\|_{\ds,\sigma}^{\mu},
\end{equation}
for all \(x \neq 0\) and almost all \(t \ge 0\).

Then, along the trajectories of the closed-loop switched system
\((\ref{eq:switched_linear}),(\ref{eq:u})\),
\[
\frac{d}{dt} \|x\|_{\ds,\sigma}
\le
- \eta_{\sigma(t)} \|x\|_{\ds,\sigma}^{1+\mu},
\quad
\eta_\sigma := \rho_\sigma - \kappa_\sigma > 0.
\]
\item If the switching signal satisfies a minimum dwell-time condition
\begin{align}\label{eq:MDT2}
\inf_i (t_{i+1}-t_i) \ge \tau^\prime  > 0,
\end{align}
then the origin of the closed-loop system is locally
finite-time stable for $\mu\in[-1,0)$ and globally  fixed-time
stable with respect to a compact set for $\mu>0$.
\item If the switching signal satisfies an average dwell-time condition
\begin{align}\label{eq:ADT2}
N_\sigma(t,t_0)\le N_0+\frac{t-t_0}{\tau_d},
\
\tau_d>\frac{\ln\gamma}{\eta_{\min}}, \ 
\eta_{\min} := \min_{\sigma \in \Sigma} \eta_\sigma,
\end{align}
then the origin of the closed-loop system is globally uniformly exponentially
stable for $\mu=0$.
\end{theorem}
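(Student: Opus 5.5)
The plan follows the structure of Theorem~\ref{thm:disturbances}, now mode by mode: first obtain the perturbed decay inequality for each mode, then reuse the switching arguments of Theorem~\ref{theorem:free_disturbances_MLF} with $\rho_\sigma$ replaced by $\eta_\sigma$.

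\emph{Step 1: decay inside a fixed mode.} Fix a mode $\sigma$ active on $[t_k,t_{k+1})$ and let $V_\sigma=\|x\|_{\ds,\sigma}$, with $\pi_\ds(x)=\ds(-\ln\|x\|_{\ds,\sigma})x$ as in~\eqref{eq:close_loop_dewelltime}. Use the gradient formula for the canonical homogeneous norm, with $P_\sigma=X_\sigma^{-1}$. The derivative of $V_\sigma$ then splits into two terms.
\begin{itemize}
\item The nominal term is exactly the one treated in the proof of Theorem~\ref{theorem:free_disturbances_MLF}. Using $A_{0,\sigma}\ds(s)=\E^{\mu s}\ds(s)A_{0,\sigma}$, $\ds(s)B_\sigma=\E^{s}B_\sigma$ and the LMI~\eqref{eq:MLF_LMI}, it is bounded by $-\rho_\sigma V_\sigma^{1+\mu}$.
\item The disturbance term equals $\|x\|_{\ds,\sigma}$ times the left-hand side of~\eqref{eq:dis_con_MLF}. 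By assumption it is bounded by $\kappa_\sigma V_\sigma^{1+\mu}$.
\end{itemize}
Adding the two gives $\dot V_\sigma\le-\eta_\sigma V_\sigma^{1+\mu}$ with $\eta_\sigma>0$. Since $\omega$ is only measurable, this holds for almost all $t$ in the Filippov sense. The justification is that $V_\sigma$ is $C^1$ (indeed analytic) on $\R^n\setminus\{0\}$, so the chain rule applies along absolutely continuous solutions.

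\emph{Step 2: switching under dwell-time conditions.} The jump inequality~\eqref{eq:Vssp} with constant $\gamma>1$ does not depend on $\omega$, because it only compares canonical norms. I will therefore repeat verbatim the telescoping estimates of Theorem~\ref{theorem:free_disturbances_MLF} with $\rho_{\min}$ replaced by $\eta_{\min}$.
\begin{itemize}
\item For $\mu\in[-1,0)$, the estimate is
\[
V_{\sigma_{k-1}}(t_k)^{-\mu}\le\gamma^{k-1}\Bigl(V_{\sigma_0}(0)^{-\mu}+\mu\eta_{\min}\tau'\tfrac{\gamma}{\gamma-1}\Bigr).
\]
Local finite-time stability then follows on the sublevel sets where $V_{\sigma_0}(0)^{-\mu}\le -\mu\eta_{\min}\gamma\tau'/(\gamma-1)$, which is the analogue of~\eqref{eq:cdt_FTS}.
\item For $\mu>0$, the reverse inequality applied at $k=1$ shows that after the first dwell interval $V\le(\mu\eta_{\min}\tau'\gamma/(\gamma-1))^{-1/\mu}$ uniformly in $x_0$. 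This is global fixed-time attractivity of a compact set.
\end{itemize}

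\emph{Step 3: average dwell time for $\mu=0$.} Here $\dot V_\sigma\le-\eta_\sigma V_\sigma$ holds on flows, and $V_{\sigma'}\le\gamma V_\sigma$ at switches; for $\mu=0$, I read~\eqref{eq:Vssp} as this linear comparison. Integrating over $[t_0,t]$ gives
\[
V_{\sigma(t)}(t)\le\gamma^{N_\sigma(t,t_0)}\E^{-\eta_{\min}(t-t_0)}V_{\sigma_0}(t_0)\le\gamma^{N_0}\E^{-(\eta_{\min}-\ln\gamma/\tau_d)(t-t_0)}V_{\sigma_0}(t_0).
\]
The exponent is negative precisely under~\eqref{eq:ADT2}. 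Combining this with the equivalence of $V_\sigma$ and $\|x\|_\ds$ in~\eqref{eq:ILF_emode}, and the fact that for $\mu=0$ the homogeneous norm is equivalent to a power of the Euclidean norm, yields global uniform exponential stability.

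\emph{Main obstacle.} The delicate point is Step~1 at the switching level rather than within a mode. The bound~\eqref{eq:dis_con_MLF} must be applied with the Lyapunov function of the currently active mode. Moreover, the control~\eqref{eq:u} in the statement should be read as the mode-dependent law~\eqref{eq:u_M}; otherwise the nominal cancellation of Step~1 fails. A second subtlety is the meaning of the constant $\gamma$ in~\eqref{eq:Vssp} when $\mu=0$. I would handle this by noting that, for $\mu=0$, the $|\mu|$-powers must be replaced by first powers, with $\gamma$ taken as $\max_{\sigma,\sigma'}\sup_{x\in S}V_{\sigma'}/V_\sigma$. That supremum is finite by degree-one homogeneity and compactness of the unit sphere.
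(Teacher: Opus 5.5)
Your route is the paper's. Its proof of this theorem is exactly your Step~1: the nominal part is bounded by $-\rho_\sigma V_\sigma^{1+\mu}$ via the LMI, and the disturbance part by $\kappa_\sigma V_\sigma^{1+\mu}$ via \eqref{eq:dis_con_MLF}. It then says that ``the remaining steps are the same as in Theorem~\ref{theorem:free_disturbances_MLF}''.

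Step~1 and Step~3 are sound. Two of your readings are genuinely needed: \eqref{eq:u} as \eqref{eq:u_M}, and \eqref{eq:Vssp} with first powers when $\mu=0$, where the $|\mu|$-power version is vacuous.

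The gap is Step~2. The estimates you copy ``verbatim'' from the proof of Theorem~\ref{theorem:free_disturbances_MLF} are not valid, and the paper's proof defers to the same computations, so it shares the defect. Set $c:=|\mu|\eta_{\min}\tau'$.

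For $\mu<0$ the coefficient $\mu\eta_{\min}\tau'$ is negative. Replacing $\sum_{i=0}^{k-1}\gamma^{-i}$ by its \emph{upper} bound $\gamma/(\gamma-1)$ therefore lowers the right-hand side, so your displayed inequality does not follow. The correct bound is your right-hand side plus $c/(\gamma-1)$. Already for $k=1$ you claim a drop of $c\gamma/(\gamma-1)$ over one interval, while the flow only guarantees $c$.

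For $\mu>0$ and $k=1$ the sum equals $1$, so you only get $V_{\sigma_0}(t_1)\le(\mu\eta_{\min}\tau')^{-1/\mu}$. The factor $\gamma/(\gamma-1)$ is the $k\to\infty$ limit of the recursion, not its $k=1$ value, and is not justified. Moreover, a bound at the single instant $t_1$ does not give attractivity of a set, which requires reaching \emph{and remaining}, because $V$ can jump by $\gamma^{1/\mu}$ at every later switch.

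Both conclusions survive if you argue with the one-step recursion for $U_k:=V_{\sigma_{k-1}}(t_k)^{-\mu}$.

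For $\mu<0$, the flow and jump bounds give $U_{k+1}\le\gamma U_k-c$ and $U_1\le V_{\sigma_0}(x_0)^{-\mu}-c$. Hence
\begin{equation*}
U_k\le\gamma^{k-1}\Bigl(V_{\sigma_0}(x_0)^{-\mu}-\tfrac{\gamma c}{\gamma-1}\Bigr)+\tfrac{c}{\gamma-1}.
\end{equation*}
Suppose $V_{\sigma_0}(x_0)^{-\mu}<\gamma c/(\gamma-1)$. The inequality must be strict, since at equality the bound stays at $c/(\gamma-1)$. Then the right-hand side becomes negative after finitely many switches, so $x$ reaches $0$ before then. In this case $U_k$ is also nonincreasing, so $U_k\le U_1$, which yields Lyapunov stability up to the jump factor $\gamma^{1/|\mu|}$.

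For $\mu>0$, you get $U_{k+1}\ge\gamma^{-1}U_k+c$ and $U_1\ge c$, so $U_k\ge c$ for all $k$ by induction. Combined with the flow bound on the first interval, this gives $V_{\sigma(t)}(x(t))\le\gamma^{1/\mu}(\mu\eta_{\min}\tau')^{-1/\mu}$ for all $t\ge t_0+\tau'$, uniformly in $x_0$. That is the fixed-time attractivity of a compact set the statement asserts.
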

\begin{IEEEproof}
Consider the mode-dependent Lyapunov functions
\(
V_\sigma\). 
By Theorem~\ref{theorem:free_disturbances_MLF}, each \(V_\sigma\) is positive
definite and satisfies, along trajectories of mode \(\sigma\),
\begin{gather*}
\dot V_\sigma(x)
\le
-\rho_\sigma V_\sigma^{1+\mu}(x)
\\ + V_\sigma(x)
\frac{
x^\top \ds^\top(-\ln V_\sigma)
P_\sigma
\ds(-\ln V_\sigma)
E_\sigma \omega(t,x)
}{
x^\top \ds^\top(-\ln V_\sigma)
P_\sigma \Gh_\ds
\ds(-\ln V_\sigma) x
}.
\end{gather*}

Using condition \eqref{eq:dis_con_MLF}, it follows that
\[
\dot V_\sigma(x)
\le
-(\rho_\sigma - \kappa_\sigma) V_\sigma^{1+\mu}(x)
= -\eta_\sigma V_\sigma^{1+\mu}(x),
\quad \forall x \neq 0.
\]
The remaining steps are the same steps as in the proof of  Theorem~\ref{theorem:free_disturbances_MLF}. 
\end{IEEEproof}
\begin{remark}[On the computation of $\|\cdot\|_\ds$] 
The homogeneous control law~\eqref{eq:u} requires an efficient computational procedure for practical implementation. An approximation scheme for evaluating $\|\cdot\|_\ds$ is proposed in~\cite{Polyakov2025}, while issues related to its real-time numerical computation are addressed in~\cite{Polyakov2016} through a bisection-based method. This numerical approach has been demonstrated to be very effective in practice (see, e.g.,~\cite{Zimenko2020a}). 
\end{remark}

\section{Simulations}
In this section, we present numerical simulations to evaluate the performance of the proposed finite-time controller applied to the switched system~\eqref{eq:switched_linear} under matched perturbations, i.e., $E_\sigma = B_\sigma$, so that the disturbance enters the system through the same channels as the control input. We also assess the performance of the nearly fixed-time controller in the presence of both matched and mismatched perturbations. The switching signal $\sigma(t)$ is illustrated in Fig.~\ref{fig:signal1}, where $0$ corresponds to the activation of the first subsystem $(A_1, B_1)$ and $1$ corresponds to the activation of the second subsystem $(A_2, B_2)$.
\begin{figure}[h!]
    \centering
    \includegraphics[width=0.45\textwidth]{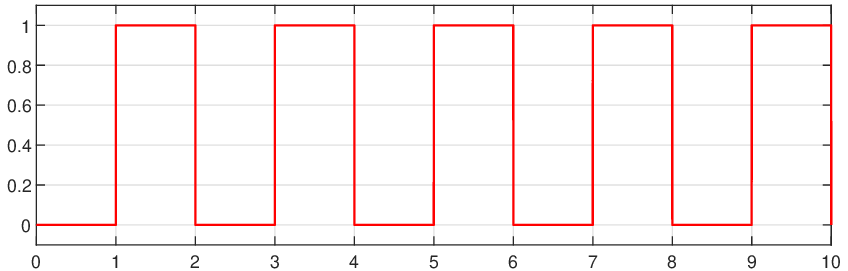} 
    \caption{Switching signal: 0 corresponds to subsystem 1 (\( A_1, B_1 \)), 1 corresponds to subsystem 2 (\( A_2, B_2 \)).}
    \label{fig:signal1}
\end{figure}

We consider the MIMO system with 
\(
\omega(t, x(t)) = 0.8 \sin(10 t), 
\)
and system matrices for \( n = 4 \):
\[
A_1 = \begin{bmatrix}
0 & 2 & 2 & 0 \\
0 & 0 & 3 & 0 \\
0 & 0 & 0 & 1 \\
0 & 0 & 0 & 0
\end{bmatrix}, \
B_1 =  \begin{bmatrix}
0 & 0 \\
0 & 0 \\
0 & 0 \\
1 & 2
\end{bmatrix}, \] \[
A_2 = \begin{bmatrix}
0 & 1 & 0 & 0 \\
0 & 0 & 3 & 0 \\
0 & 0 & 0 & 1 \\
0 & 0 & 0 & 0
\end{bmatrix}, \
B_2 = \begin{bmatrix}
0 & 0 \\
0 & 0 \\
0 & 0 \\
-1 & -2
\end{bmatrix}.
\]
The finite-time control law \( u \) is defined by \eqref{eq:u} with parameters \( \mu_i = -0.1 \) and \( \rho_i = 2 \). The corresponding matrices for the controller are:
\begin{gather*} 
K_1 = \begin{bmatrix}
-40.7387 & -14.1629 & -30.3225 & -3.7321 \\
-81.4773 & -28.3257 & -60.6449 & -7.4643
\end{bmatrix}, \\
P_1 = \begin{bmatrix}
81.8695 & 7.8511 & 31.8486 & 3.0001 \\
7.8511 & 2.9934 & 5.1863 & 0.5323 \\
31.8486 & 5.1863 & 14.8742 & 1.4720 \\
3.0001 & 0.5323 & 1.4720 & 0.1753
\end{bmatrix},\\ \Gh_\ds = \begin{bmatrix}
0.6000 & 0 & 0 & 0 \\
0 & 0.7000 & 0 & 0 \\
0 & 0 & 0.8000 & 0 \\
0 & 0 & 0 & 0.9000
\end{bmatrix},\\ 
\Gh_0 = \begin{bmatrix}
0.6 & 0 & 0 & 0 \\
0 & 0.7 & 0 & 0 \\
0 & 0 & 0.8 & 0 \\
0 & 0 & 0 & 0.9
\end{bmatrix},\\ 
K_2 = \begin{bmatrix}
67.7426 & 37.6748 & 27.7519 & 3.7321 \\
135.4852 & 75.3497 & 55.5037 & 7.4643
\end{bmatrix}, \\
P_2 = \begin{bmatrix}
235.3985 & 103.6428 & 55.1037 & 5.1875 \\
103.6428 & 47.1603 & 26.0054 & 2.5019 \\
55.1037 & 26.0054 & 15.1619 & 1.5323 \\
5.1875 & 2.5019 & 1.5323 & 0.1825
\end{bmatrix}.
\end{gather*}
\begin{figure}[htbp]
    \centering
    \includegraphics[width=0.5\textwidth]{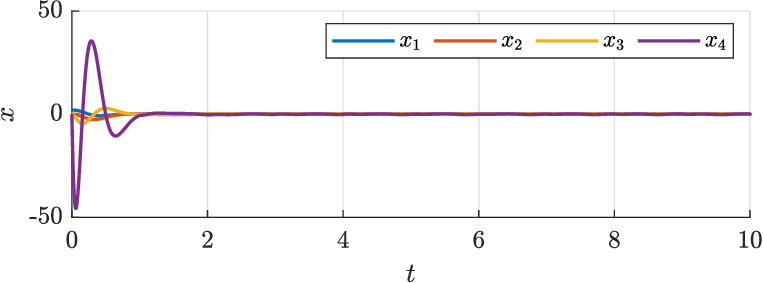} 
    \includegraphics[width=0.5\textwidth]{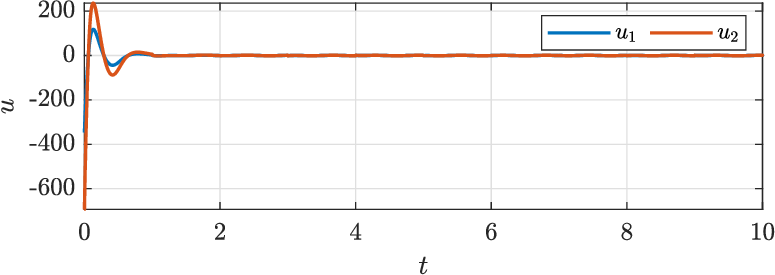} 
    \caption{System states and control input versus time for the finite-time controller.}
    \label{fig:Fig3}
\end{figure}
The simulation is performed with initial condition \( x_0 = [2, 0, 0, 0]^\top \). The time evolution of the states and control input is shown in Fig.~\ref{fig:Fig3}, demonstrating the effectiveness of the proposed controller in achieving finite-time stabilization.

We consider the MIMO system with 
\(
\omega(t, x(t)) = \left[0.5 \cos(2 t) 0.4 \sin(5 t)  0.4 \sin(5 t) 0.3 \cos(3 t) \right]^\top
\), $E_1 = E_2 = I_4$
and system matrices for \( n = 4 \):
\[
A_1 = \begin{bmatrix}
0 & 2 & 2 & 0 \\
0 & 0 & 3 & 0 \\
0 & 2 & 0 & 1 \\
0 & 1 & 0 & 0
\end{bmatrix}, \
B_1 =  \begin{bmatrix}
0 & 0 \\
0 & 0 \\
0 & 0.1 \\
1 & 0
\end{bmatrix}, \] \[
A_2 = \begin{bmatrix}
0 & 1 & 0 & 0 \\
0 & 0 & 3 & 0 \\
0 & 4 & 0 & 1 \\
0 & 1 & 0 & 0
\end{bmatrix}, \
B_2 = \begin{bmatrix}
0 & 0 \\
0 & 0 \\
0 & 2 \\
1 & 0
\end{bmatrix}.
\]
The finite-time control law \( u \) is defined by \eqref{eq:u_M} with parameters \( \mu_i = 1 \) and \( \rho_i = 1 \). The corresponding matrices for the controller are:
\begin{gather*} 
K_1 = \E^3\begin{bmatrix}
0     &    0   &      0 &  -0.0040\\
   -2.0744 &  -0.2667  & -0.2400     &    0
\end{bmatrix}, \\
P_1 = \E^{3}\begin{bmatrix}
3.9350   & 0.7620  &  0.1448    &     0\\
    0.7620  &  0.1629  &  0.0349     &    0\\
    0.1448   & 0.0349  &  0.0105    &     0\\
         0    &     0     &    0   & 0.0021
\end{bmatrix},\\ 
\Gh_\ds = \begin{bmatrix}
4 & 0 & 0 & 0 \\
0 & 3 & 0 & 0 \\
0 & 0 & 2 & 0 \\
0 & 0 & 0 & 2
\end{bmatrix},\quad \Gh_0 = \begin{bmatrix}
0.6 & 0 & 0 & 0 \\
0 & 0.7 & 0 & 0 \\
0 & 0 & 0.8 & 0 \\
0 & 0 & 0 & 0.9
\end{bmatrix},\\  
K_2 = \E^{3}\begin{bmatrix}
        0   &      0   &      0  & -4\\
 -163.0303 & -40.5051&  -12   &      0
\end{bmatrix}, \\
P_2 =\E^{3} \begin{bmatrix}
3.9350 &   0.7620 &   0.1448   &      0\\
    0.7620   & 0.1629   & 0.0349 &        0\\
    0.1448 &   0.0349  &  0.0105    &     0\\
         0    &     0       &  0  &  0.0021
\end{bmatrix}.
\end{gather*}
\begin{figure}[htbp]
    \centering
    \includegraphics[width=0.5\textwidth]{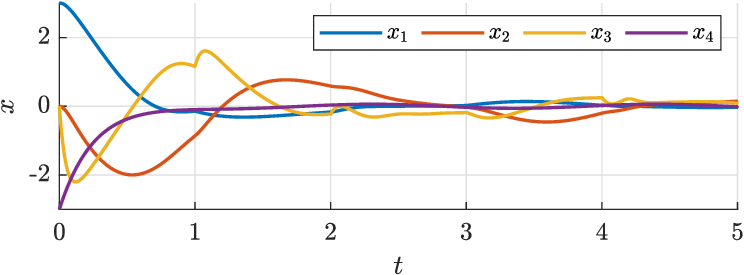} 
    \includegraphics[width=0.5\textwidth]{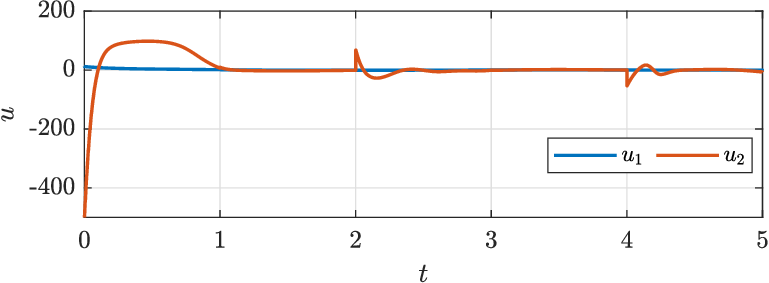} 
    \caption{System states and control input versus time for the nearly fixed-time controller.}
    \label{fig:Fig4}
\end{figure}
The simulation is performed with initial condition \( x_0 = [-3, 0, 0, 3]^\top \). The time evolution of the states and control input is shown in Fig.~\ref{fig:Fig4}, demonstrating the effectiveness of the proposed controller in achieving nearly fixed-time stabilization.

\section{Conclusions}
In this paper, homogenizing and stabilizing control laws were developed for switched linear MIMO systems. It was shown that the closed-loop system can be rendered homogeneous of a prescribed degree, leading to global uniform finite-time, exponential, or nearly fixed-time stability under a minimum or average dwell-time condition. The proposed controller synthesis is formulated in terms of linear matrix equations and inequalities, enabling a systematic design procedure. Robustness with respect to disturbances and uncertainties was analyzed and sufficient conditions ensuring preservation of the stability properties were established. Numerical simulations confirmed the effectiveness of the proposed approach.

\bibliographystyle{IEEEtran}
\bibliography{bib}

\begin{IEEEbiography}[
]{Moussa Labbadi}
(Member, IEEE) received the B.S. degree in Mechatronics Engineering from UH1 in 2015, the M.S. degree in Mechatronics Engineering from UAE in 2017, and the Ph.D. degree in Automatic Control from EMI, UM5, Rabat, Morocco, in September 2020. From 2020 to 2021, he served as a Researcher at the MAScIR Foundation, Morocco. Subsequently, from 2021 to 2022, he held the position of ATER with the Department of Automatic Control, INSA, LAMIH, and UPHF. Between 2022 and 2023, he worked as a Postdoctoral Researcher at INPG, GIPSA-Lab, and UGA. He was an associate professor at Aix-Marseille University (AMU) and a member of the Laboratory of Informatics and Systems (LIS) From 2023 to 2026. He is currently a Professor (Junior) at INP Bretagne and IRDL laboratory. 
His research interests span variable structure control, sliding mode control, observation, nonlinear system stability, control theory, and their applications.
Pr. Labbadi is an active member of several IEEE and IFAC Technical Committees and he is Associate editor for CCTA and IFAC WC 2026 and is an associate editor of IEEE T-ASE.
\end{IEEEbiography}
\begin{IEEEbiography}[
]
{Andrey Polyakov}
 received the Ph.D. degree in systems analysis and control from Voronezh State University, Voronezh, Russia in 2005.,Till 2010, he was an Associate Professor with Voronezh State University. In 2007 and 2008, he was working with CINVESTAV, Mexico City, Mexico. From 2010 to 2013, he was a Leader Researcher with the Institute of the Control Sciences, Russian Academy of Sciences, Moscow, Russia. In 2013, he joined Inria, Lille, France. He has coauthored more than 100 papers in peer-reviewed journals as well as several  books such as \emph{Attractive Ellipsoids in Robust Control},  \emph{Road Map for Sliding Mode Control Design},  \emph{Generalized Homogeneity in Systems and Control}. His research interests include various aspects of nonlinear control and estimation theory, in particular, generalized homogeneity, finite/fixed-time stability, input-to-state stability, and Lyapunov methods for both finite dimensional and infinite dimensional systems
\end{IEEEbiography}
\begin{IEEEbiography}[
]{Denis Efimov}
(Fellow, IEEE) received the Ph.D. degree in Automatic Control from Saint Petersburg State Electrotechnical University, Saint Petersburg, Russia, in 2001, and the Dr.Sc. degree from the Institute for Problems in Mechanical Engineering, Russian Academy of Sciences, Saint Petersburg, Russia, in 2006. 
From 2006 to 2011, he held research positions at L2S (CNRS–Supélec), France; the Montefiore Institute, University of Liège, Liège, Belgium; and the IMS Laboratory (CNRS), University of Bordeaux, Bordeaux, France. In 2011, he joined Inria, France, where he has been the Scientific Head of the VALSÉ Team since 2018. 
He has coauthored more than 200 scientific articles and coordinated several research projects. Dr.~Efimov is a member of several IFAC Technical Committees. He served as an Associate Editor of the \emph{IEEE Transactions on Automatic Control} and is currently an Associate Editor of \emph{Automatica} and a Senior Editor of \emph{Systems \& Control Letters}.
\end{IEEEbiography}



    


%
%
%
%
%
%

\end{document}